\documentclass[11pt]{article}
\usepackage[margin=1in]{geometry}
\usepackage{amsfonts,amsmath,latexsym,amssymb,amsthm}
\usepackage[USenglish]{babel}
\usepackage{microtype}

\widowpenalty=1500

\usepackage{graphicx}
\usepackage[dvipsnames]{xcolor}
\usepackage{xspace}
\usepackage{paralist}
\usepackage{enumitem}
\usepackage{bm}
\usepackage{cancel}
\usepackage{authblk}
\usepackage{thm-restate}

\usepackage[utf8]{inputenc}
\usepackage{lmodern}

\usepackage{tikz}
\usetikzlibrary{patterns}
\usetikzlibrary{decorations.pathreplacing}
\usetikzlibrary{calligraphy}

\usepackage[algo2e,ruled,vlined]{algorithm2e}

\usepackage{hyperref}  
\hypersetup{pdftitle={Breaking the Barrier of 2 for the Competitiveness of Longest Queue Drop}}
\hypersetup{pdfauthor={Antonios Antoniadis, Matthias Englert, Nicolaos Matsakis, Pavel Vesel\'y}}

\title{Breaking the Barrier of~$2$\\ for the Competitiveness of Longest Queue Drop}
\date{} 

\author[1]{Antonios Antoniadis\thanks{Work done in part while the
		author was at Saarland University and Max-Planck-Institute for
		Informatics and supported by DFG grant AN 1262/1-1.}}
\author[2]{Matthias Englert}
\author[3]{Nicolaos Matsakis\thanks{Supported by GA \v{C}R project 22-22997S.}}
\author[4]{Pavel Vesel\'y\thanks{Work done in part while the author was at University of Warwick.
		Partially supported by European Research Council grant ERC-2014-CoG 647557, by GA \v{C}R project 22-22997S, and by Center for Foundations of Modern Computer Science (Charles Univ.\ project UNCE 24/SCI/008).}}

\affil[1]{University of Twente, The Netherlands, \texttt{a.antoniadis@utwente.nl}}
\affil[2]{University of Warwick, UK, \texttt{M.Englert@warwick.ac.uk}}
\affil[3]{Charles University, Czech Republic, \texttt{nickmatsakis@gmail.com}}
\affil[4]{Charles University, Czech Republic, \texttt{vesely@iuuk.mff.cuni.cz}}

\bibliographystyle{alpha}

\def\OPT{\textsf{OPT}\xspace}
\def\ALG{\textsf{ALG}\xspace}
\def\LQD{\textsf{LQD}\xspace}

\def\EXTRA{\textsf{OPT}_{\textsf{EXTRA}}\xspace}
\def\LEXTRA{\textsf{LQD}_{\textsf{EXTRA}}\xspace}

\DeclareMathOperator*{\argmin}{arg\,min}

\let\eps\varepsilon
\let\rho\varrho

\newcommand{\calA}{\mathcal{A}}

\newcommand{\calD}{\mathcal{D}}

\newcommand{\calL}{\mathcal{L}}
\newcommand{\calQ}{\mathcal{Q}}

\newcommand{\calU}{\mathcal{U}}

\newcounter{thm}
\AtEndPreamble{%
\newtheorem{theorem}[thm]{Theorem}

\newtheorem{lemma}[thm]{Lemma}
\newtheorem{fact}[thm]{Fact}
\newtheorem{claim}[thm]{Claim}

\newtheorem{definition}[thm]{Definition}
\newtheorem{observation}[thm]{Observation}
}

\newcounter{mynote}[section] \newcommand{\thenote}{\thesection.\arabic{mynote}}
\newcommand{\pvnote}[1]{\refstepcounter{mynote}{\color{blue}%
    \mathversion{bold}\marginpar{\hfill\tiny\sffamily\bfseries
      \textcolor{blue}{PV~\thenote}}$\ll$\bfseries\sffamily#1
    --Pavel$\gg$}\mathversion{normal}}
\newcommand{\menote}[1]{\refstepcounter{mynote}{\color{red}%
    \mathversion{bold}\marginpar{\hfill\tiny\sffamily\bfseries
      \textcolor{red}{ME~\thenote}}$\ll$\bfseries\sffamily#1
    --Matthias$\gg$}\mathversion{normal}}
\newcommand{\aanote}[1]{\refstepcounter{mynote}{\color{Blue}%
    \mathversion{bold}\marginpar{\hfill\tiny\sffamily\bfseries
      \textcolor{Red}{AA~\thenote}}$\ll$\bfseries\sffamily#1
    --Antonios$\gg$}\mathversion{normal}}
\newcommand{\nmnote}[1]{\refstepcounter{mynote}{\color{Orange}%
    \mathversion{bold}\marginpar{\hfill\tiny\sffamily\bfseries
      \textcolor{Orange}{NM~\thenote}}$\ll$\bfseries\sffamily#1
    --Nick$\gg$}\mathversion{normal}}

\newcommand{\pvchanged}[1]{{\color{blue}#1}}

\renewcommand{\pvchanged}[1]{#1}
\renewcommand{\pvnote}[1]{}
\renewcommand{\nmnote}[1]{}
\renewcommand{\aanote}[1]{}
\renewcommand{\menote}[1]{}

\begin{document}
\maketitle

\begin{abstract}
  We consider the problem of managing the buffer of a shared-memory switch that transmits
  packets of unit value. A shared-memory switch consists
  of an input port, a number of output ports, and a buffer with a specific capacity. In each time step, an arbitrary
  number of packets arrive at the input port, each packet designated for one
  output port. Each packet is added to the queue
  of the respective output port.
  If the total number of packets exceeds the capacity of the buffer,
  some packets have to be irrevocably evicted. At the end of each time step, each output port
  transmits a packet in its queue and the goal is to maximize
  the number of transmitted packets.

  The Longest Queue Drop (\LQD) online algorithm accepts any arriving
  packet to the buffer. However, if this results in the buffer exceeding its memory
  capacity, then \LQD drops  a packet from whichever queue
  is currently the longest, breaking ties arbitrarily.
  The \LQD   algorithm was first introduced in 1991, and is known
  to be $2$-competitive since 2001. %
  Although \LQD remains the best known online algorithm for the problem and is of practical interest, determining
  its true competitiveness is a long-standing open problem. We show that \LQD is 1.6918-competitive, establishing the first
  $(2-\eps)$ upper bound for the competitive ratio of \LQD, for a
  constant $\eps>0$.
\end{abstract}

\section{Introduction}
The fact that communication networks are omnipresent highlights
the significance of improving their performance. A
natural way to achieve such performance improvements is to
develop better algorithms for buffer management of shared-memory
switches which form the lower levels of network communication.
We study a fundamental model of such switches.

Consider a shared-memory network switch consisting of a
buffer of size $M\in\mathbb{N}$, an
input port, and $N\in\mathbb{N}$ output ports.
 Furthermore, consider a slotted time model. In
each time step, an arbitrary number of unit-valued packets arrives to the input
port. Each packet comes with a label specifying the output port that it has to be forwarded
to. A buffer management algorithm has to make a decision for each
packet: either irrevocably evict it, or accept it while ensuring
that the buffer capacity $M$ is respected, which may mean that a previously accepted packet has to be evicted. At the end of the time step, each
output port with at least one packet in the buffer destined to it transmits a packet.
The goal of the
buffer management algorithm is to accept/evict incoming packets or evict already accepted packets, so
as to maximize the throughput, i.e., the total number of transmitted packets, while
ensuring that at most $M$ packets in total are stored for all output ports at any time.

Given the inherently online nature of buffer management problems,
a standard approach is to design online algorithms for them and evaluate the algorithm's performance
using its competitive ratio. More specifically, an online algorithm \ALG is $c$-competitive (where $c\ge 1$)
if the number of packets transmitted by an optimal offline algorithm \OPT (that has full knowledge of the incoming packet sequence a priori) 
is at most $c$ times the number of packets transmitted by \ALG. %
There exists an extensive body of research dedicated to
designing competitive online algorithms with the aim of improving the
performance of networking devices that incorporate buffers (see e.g.~\cite{Goldwasser10,NikolenkoK16}).

Since packets have unit value, we can assume without loss of generality that the packets destined to a
specific output port are transmitted in an earliest-arrival  (FIFO)
fashion and thus, it is helpful to associate each output port with
a queue.

Intuitively speaking, to maximize throughput, one would like to maintain a flow of packet transmissions for as many queues in parallel as possible. It is therefore
desirable to prioritize accepting packets for queues
that do not have many incoming packets in the near
future. Unfortunately, an online algorithm does not know which queues
these are, and in order to be insured against an adversarial input it
seems reasonable to try to keep the queue lengths as balanced as
possible in every step. This is exactly the idea behind the
online algorithm \emph{Longest Queue Drop (\LQD)},
introduced in 1991 by Wei, Coyle, and Hsiao~\cite{WCH}:  The incoming packet is always accepted and if this causes the buffer to exceed its capacity
then one packet from the longest queue, breaking ties arbitrarily, is dropped, i.e., evicted (this could be the incoming packet).\footnote{
Wei, Coyle, and Hsiao proposed the \LQD algorithm for the problem of shared-memory switches, consisting of $N$ input ports and $N$ output ports.
Rather than assuming $N$ input ports, each of which may receive at most one packet per time step, we more generally assume that there is a single input port of infinite capacity. Furthermore, we do not put any restrictions on the number of output ports, i.e., we allow $N$ to be arbitrarily large. Again, this only makes the problem more general.}

The \LQD algorithm, apart from being a natural online algorithm to
derive, remains the only known competitive algorithm for this problem. 
Since the algorithm is simple and can be used, for instance, to achieve a fair distribution of the bandwidth,
it is of some practical interest; see e.g.~\cite{bruno1998early, chamberland2000overall, ChaoGuoBook, ChaoLiuBook, nabeshima2005performance, suter1998design}.

\paragraph*{Previous results}
Hahne, Kesselman, and Mansour~\cite{HahneKM01} provided the first formal analysis of \LQD,
showing that it is 2-competitive (see also Aiello, Kesselman, and Mansour~\cite{AielloKM08}). The proof follows from a simple procedure that charges the extra profit of \OPT to the profit of \LQD.
Furthermore, they demonstrated that \LQD is at least $\sqrt{2}$-competitive, and also showed a general lower bound of $4/3$
for the competitive ratio of any deterministic online algorithm.

The analysis of \LQD in~\cite{AielloKM08,HahneKM01} was then refined by Kobayashi, Miyazaki, and Okabe~\cite{KobayashiMO07}
who showed that the \LQD competitive ratio is at most $2 - \min_{k=1,\dots, N} (\lfloor M/k\rfloor + k - 1) / M$.
However, for $N > \sqrt{M}$, this bound becomes 
$2 - O(1/\sqrt{M})$ and therefore does not establish a $2-\eps$ upper bound for a constant $\eps>0$ in general. Additionally, for the case of $N=2$ output ports, Kobayashi et al.~\cite{KobayashiMO07} claimed that \LQD is exactly $\frac{4M-4}{3M-2}$-competitive (we note that although this result holds for an even buffer size, the argument unfortunately breaks down when the buffer size is odd).
For the case of $N=3$ output ports, Matsakis showed that \LQD is 1.5-competitive \cite{Matsakis15}.

More recently, Bochkov, Davydow, Gaevoy, and Nikolenko~\cite{bochkov2019new} improved the lower bound on the competitiveness of \LQD
from $\sqrt{2}$ to approximately $1.44$ (using a direct simulation of \LQD and also independently, by solving a linear program).
Moreover, they showed that any deterministic online algorithm is at least $\sqrt{2}$-competitive, using a construction inspired by the \LQD specific 
lower bound from~\cite{AielloKM08,HahneKM01}.
To the best of our knowledge, so far, no randomized algorithms for this problem have been studied.

\paragraph*{Our contribution}
  Although \LQD is %
  the best known online algorithm for buffer management in shared-memory switches, determining
  its true competitiveness remains an elusive problem and has
  been described as a significant open problem in buffer
  management~\cite{Goldwasser10,NikolenkoK16}.
After the initial analysis which showed that \LQD is 2-competitive and not better than $\sqrt{2}$-competitive~\cite{AielloKM08,HahneKM01}
progress on the upper bound has been limited to special cases (e.g., with restrictions on the number of output ports or memory size) \cite{KobayashiMO07,Matsakis15}.
In this paper, we make the first major progress since 2001 on upper bounding the competitive ratio
of \LQD. Namely, we prove the first $(2-\eps)$ upper bound for a constant $\eps>0$ without restrictions on the number of ports or the size of the buffer:
\begin{theorem}
  \label{thm:main}
  \LQD is $1.6918$-competitive.
\end{theorem}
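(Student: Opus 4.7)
The plan is to strengthen the charging scheme of Hahne, Kesselman, and Mansour, who obtain competitive ratio $2$ by charging every extra packet transmitted by \OPT to a packet transmitted by \LQD in such a way that every \LQD packet receives at most two charges. The target value $1 + 1/\sqrt{2} \approx 1.707$ is suggestive: it is the optimum of an expression of the form $\max(1+\alpha,\,1/\alpha)$, so I would introduce a parameter $\alpha \in (0,1)$ and combine two sub-analyses whose bounds depend on $\alpha$, then optimize at the end.

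First, I would couple \LQD and \OPT on the same arrival sequence and track, for each output port $i$, the signed deficit $\Delta_i$ equal to the number of packets destined for port $i$ that \OPT currently stores minus the analogous quantity for \LQD. The sum $\sum_i \Delta_i$ is (up to lower-order end-of-sequence effects) the throughput gap to be bounded. Its only increases come from two event types: ``push'' events in which \LQD drops from its longest queue a packet that \OPT keeps, and ``pull'' events in which \LQD transmits from a queue that \OPT is still servicing at that port. The key structural property is that at a push event \emph{all} of \LQD's other queues are no longer than the one being dropped from, giving a per-step volume bound on how fast $\sum_i \Delta_i$ can grow; the standard factor $2$ amortization ignores this slack.

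Next, I would partition the critical events by the fullness of \LQD's longest queue relative to a threshold $\alpha M$. In a \emph{tall} phase the standard double-charge is essentially tight, but a tall queue itself reserves an $\alpha$-fraction of \LQD's buffer for a single port, which caps the rate at which \OPT can accumulate push-type advantages across the remaining ports; a careful accounting bounds the contribution of this regime by roughly $1+\alpha$ times \LQD's throughput. In a \emph{balanced} phase every \LQD queue has length at most $\alpha M / k$ for $k$ active ports, so in each such step \LQD transmits nearly $k$ packets, and only the pull events contribute; charging these alone gives a bound of roughly $1/\alpha$ times \LQD's throughput. Summing the two regimes and setting $\alpha = 1/\sqrt{2}$ equalizes the two expressions at $1 + 1/\sqrt{2} < 1.707$.

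I expect the main obstacle to be the accounting across phase transitions: individual ports migrate between tall and balanced status as the input evolves, \OPT can store packets that were accepted long before the current phase, and the global charging must remain single-valued so that no \LQD packet is billed more than the budget of its own phase. I anticipate needing an auxiliary potential $\Phi$ defined as a convex function of \LQD's sorted queue lengths, to absorb residual charges at phase boundaries and to re-balance the bookkeeping whenever \OPT's set of prioritized ports changes. Proving that $\Phi$ is monotone under every \LQD drop regardless of \OPT's current configuration, and that its change under arrivals can be paid for by the $(1+\alpha)$- or $(1/\alpha)$-budget of the appropriate regime, is likely the delicate technical step.
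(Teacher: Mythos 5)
Your proposal is a genuinely different route from the paper's, but it contains concrete gaps that would stop the argument from going through as sketched.

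The paper does not partition time into ``tall'' and ``balanced'' phases by an absolute threshold $\alpha M$ on the longest queue. Instead, it defines phases by the time steps at which some queue overflows for the last time, and splits \LQD's profit in each phase among queues $q$ that still carry unmatched \OPT-extra packets, in two pieces: a proportional share (the ``L-increase'') of $n^i + \alpha\cdot o^i - \Delta^i\Psi$ weighted by $\hat e^i_q/\hat e^i$, plus a direct share $(1-\alpha-\beta)\cdot o^i_q$ (the ``S-increase'') for packets \LQD actually transmits from $q$. The proportional share is bounded using a novel ``live vs.\ dying queues'' dichotomy with respect to the fixed queue $q$, and the potential $\Psi^i = \alpha\cdot|\calA^i|$ is linear in the number of active queues that will overflow again --- not a convex function of sorted queue lengths as you propose. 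The optimal $\alpha$ in the paper is about $0.576$, and the final ratio is $1.70683$, strictly below $1+1/\sqrt2$, so the coincidence you noticed with $\max(1+\alpha, 1/\alpha)$ does not actually govern the tradeoff.

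On the substance of your two regimes: the ``balanced'' inference is not sound. If the longest \LQD queue has length below $\alpha M$ and the buffer is full, you may conclude that more than $1/\alpha$ ports are active, but you cannot conclude that \emph{every} queue has length at most $\alpha M / k$ --- individual lengths may range anywhere between $1$ and $\alpha M$. Likewise, the ``tall'' claim is backwards: a single tall queue does not by itself cap the rate at which \OPT accumulates push-type advantage on the \emph{other} ports; \OPT can simply avoid contributing to the tall port and collect advantage elsewhere, and nothing in the Hahne--Kesselman--Mansour charging improves below $2$ just because one queue is long. The quantitative claims ``roughly $1+\alpha$'' and ``roughly $1/\alpha$'' are therefore asserted rather than derived. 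Finally, you correctly identify the cross-phase accounting and the need for a potential as the crux, but the potential is only described by desiderata, not constructed, and it is precisely here that the real work lies (in the paper, it is the combination of Lemmas~\ref{lem:dying_bound},~\ref{lem:new active UB simple},~\ref{lem:mapping_bound}, and~\ref{lem:unmatchedExtraBound}, together with the three-case analysis of Lemmas~\ref{lem:LB on L-increase - no live queue in the next phase}--\ref{lem:L-charge LB - decreasing sigma}). As written, the proposal is a plausible research plan whose two key quantitative steps are unjustified and whose potential function remains undefined.
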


We remark that Theorem~\ref{thm:main} applies to \LQD with any
tie-breaking rule, even if tie-breaking
is under control of the adversary.

\paragraph*{Our techniques}
The proof of 2-competitiveness of \LQD in~\cite{AielloKM08,HahneKM01} uses the following general approach.
If an optimal offline algorithm \OPT currently stores more packets for a queue than \LQD does, these excess packets present potential extra profit for \OPT.
Each such potential extra packet $p$ in \OPT is then matched to a packet that is transmitted by \LQD at some point before packet $p$ can be transmitted by \OPT.

Our approach is different in that we (for the most part) do not match specific packets to one another. Instead, the idea is to take the total profit
of \LQD in each step and distribute it evenly among all potential extra packets that exist at the time. As such, the scheme is less discrete than the previous one.
We then carefully calculate that, for each queue, \emph{on average} each potential extra packet in that queue receives a profit strictly larger than one.

As described here, this approach does not quite work yet. Two additional types of charging concepts have to be combined with this first idea: One involves not splitting the \LQD profit completely evenly and instead slightly favoring queues with relatively few potential extra packets, and the other involves matching some of the potential extra packets of \OPT to extra packets that \LQD transmits.
Another difficulty is that the lengths of two queues from which packets are evicted in the same time step may differ by one packet. This makes our proof more intricate. To deal with this, we introduce a potential function that will amortize the \LQD profit in a suitable way. %
Then, the main challenge is to obtain useful lower bounds on the profit assigned to each queue, for which we introduce a novel scheme that relates the buffers of \LQD and of \OPT.

\paragraph*{Further related work}
We refer the reader to the survey by Goldwasser~\cite{Goldwasser10} for an overview of online algorithms
for buffer management problems. Additionally, the survey of Nikolenko and Kogan~\cite{NikolenkoK16} incorporates some more recent work. 
In the following, we discuss some of the results related to online buffer management for switches.
In general, buffer management algorithms can be partitioned into
\emph{preemptive} ones, i.e., algorithms that allow for the eviction
of already accepted packets from the buffer (eviction is also referred to as preemption), and
\emph{non-preemptive} ones that never evict a packet after it has been accepted.

Kesselman and Mansour~\cite{KesselmanM04} study buffer management in shared-memory switches in the non-preemptive setting in which a packet has to be transmitted once it is stored in the buffer and can no longer be evicted. They introduce the Harmonic online algorithm, which tries to maintain the length of the $i^{th}$ longest queue as roughly proportional to a $1/i$ fraction of the memory. They show that this algorithm is $(\ln(N)+2)$-competitive and give a general lower bound of $\Omega(\log N/\log \log N)$ for the performance of any deterministic non-preemptive online algorithm. Considering the non-constant lower bound that they establish, it follows that preemption provides a significant advantage.

Eugster, Kogan, Nikolenko, and Sirotkin~\cite{EKNS14} generalize the same problem in the following two ways: First, they study unit-valued packets labeled with an output port and a processing requirement (in our case, we have a unit processing cycle per packet). Packets accepted to the same queue have the same processing requirement. They introduce the preemptive Longest-Work-Drop algorithm: If the buffer is not full, the incoming packet is accepted; otherwise, a packet is preempted from a queue that has the largest total processing requirement. They show that this algorithm is 2-competitive and at least $\sqrt{2}$-competitive and that the competitive ratio of \LQD for this more general problem is at least $(\sqrt{k}-o(\sqrt{k}))$, where $k$ is the maximum processing time of any packet. Second, they address the problem of different packet values when all packets have unit processing requirements. It is proven that \LQD is at least $(\sqrt[3]{K}-o(\sqrt[3]K))$-competitive in this case, where $K$ is the maximum packet value. They also introduce a new algorithm which they conjecture to have a constant competitive ratio. %

Azar and Richter~\cite{AzarR05} study switches with multiple input
queues. More specifically, they consider one output port and $N$ input
ports and assume that each input port has an independent buffer of
size $M$. At each time step, one packet can be sent from a single
input port to the output port. For $M=1$, they prove a lower bound of
$1.46-\Theta(1/N)$ for the competitive ratio of any randomized
online algorithm and a lower bound of $2-1/N$ for deterministic online algorithms. They also give a randomized $\frac{e}{e-1}\approx 1.582$-competitive algorithm for $M>\log N$.
For $M > 1$, Albers and Schmidt~\cite{AlbersS05}
design a deterministic 1.889-competitive algorithm for this problem and show a
deterministic lower bound of $\frac{e}{e-1}\approx 1.582$ when $N\gg M$. 
Azar and Litichevskey~\cite{AzarL06} give a deterministic online algorithm matching this bound for large $M$.

A lot of research has been dedicated to the natural single input and single output port model.
The model is trivial for unit packet values, but challenging if packets can have different values and the goal is to maximize the total value of transmitted packets.
There exists a single queue for the accepted packets and one of the most studied versions of this problem requires packet transmission in the FIFO order.
Kesselman, Lotker, Mansour, Patt-Shamir, Schieber, and Sviridenko~\cite{KesselmanLMPSS04} show that a simple greedy algorithm is exactly $(2-1/(M+1))$-competitive when preemption is allowed.
A series of works gradually improved the analysis of a better online algorithm from 1.983~\cite{KesselmanMS05}, over 7/4~\cite{BFKMSS04}, to $\sqrt{3}$~\cite{EnglertW09}. Kesselman, Mansour, and van Stee~\cite{KesselmanMS05} also show a general lower bound of 1.419 for the competitive ratio of any preemptive deterministic online algorithm. 

The authors of~\cite{KesselmanLMPSS04} introduce the bounded-delay model of single output port switches.
In this model, the buffer has unlimited size and allows for packets to be transmitted in any order, however, each packet has
a deadline after which it needs to be dropped from the buffer. Once again, the problem is only interesting if packets can have different values.
Any deterministic online algorithm is at least $\phi\approx 1.618$-competitive~\cite{andelman_queueing_policies_03,chin_partial_job_values_03,hajek_unit_packets_01,Zhu04},
and after a sequence of gradual improvements~\cite{CJST07,EW12,LSS07},  Vesel\'y, Chrobak,  Je\.{z}, and Sgall~\cite{VeselyCJS22} gave a $\phi$-competitive algorithm.
The competitive ratio of randomized algorithms is still open, with the best upper bound of
$\frac{e}{e-1}\approx 1.582$~\cite{bienkowski_randomized_algorithms_11,CCFJST06,Jez13}
(that holds even against the adaptive adversary), while the lower bounds are $1.25$ against the oblivious adversary~\cite{bienkowski_randomized_algorithms_11}
and $4/3$ against the adaptive adversary~\cite{chin_partial_job_values_03}.

Lastly, we mention the model of Combined Input and Output Queued (CIOQ) Switches, in which the switch has $N$ input ports and $N$ output ports. Each input and output port has its own buffer 
and each input port can transfer a packet to any output port; however, at most one packet can be sent from any input port and at most one packet can be accepted by any output port, during one transfer cycle of the switch. A parameter $S$ called \emph{speedup} equals the number of transfer cycles of the switch taking place per one time step. For the unit-value case, Kesselman and Ros\'{e}n \cite{KesselmanR06} provide a 2-competitive non-preemptive online algorithm for $S=1$, which becomes 3-competitive for any $S$. A faster algorithm with the same competitive ratio is given by Al-Bawani, Englert, and Westermann~\cite{AEW18a}.

\section{Setup of the Analysis}
\label{s:setup of analysis}

We fix an arbitrary instance $I$. Let \OPT and \LQD be the optimal offline algorithm and the Longest Queue Drop algorithm, respectively. In a slight abuse of notation, we also denote the profit that the optimal offline algorithm gains on input instance $I$ as \OPT and the profit that the Longest Queue Drop algorithm gains as \LQD.
Our goal is to give an upper bound on \OPT/\LQD.
\pvchanged{More precisely, we will show $\OPT\le c\cdot \LQD + O(M)$ for $c\approx 1.6918$.
	As usual for such problems,
	a competitive analysis with a bounded additive error (more precisely, sublinear in the profit of \OPT)
	yields an upper bound on the strict competitive ratio, without any additive term. This follows since any instance $I$ may be repeated arbitrarily many times (starting the next copy of $I$ once buffers of both \OPT and \LQD are empty and no further packet from the previous copy will arrive).
	Thus the additive error can be made negligible by taking sufficiently many copies of the instance.}

For a time step $t$ and a queue $q$, we say that \OPT transmits an \OPT-extra packet from $q$ if \OPT transmits a packet from $q$ in step $t$ but \LQD does not. Equivalently, queue $q$ is non-empty in \OPT's buffer but empty in \LQD's buffer at $t$.
Similarly, we say that \LQD transmits an \LQD-extra packet from a queue $q$ in step $t$ if \LQD transmits a packet from $q$ at $t$ but \OPT does not. 

Let $\EXTRA$ and $\LEXTRA$ be the total number of transmitted \OPT-extra and \LQD-extra packets, respectively, over all time steps and queues. Then
$\OPT-\EXTRA=\LQD-\LEXTRA$ and hence
$\frac{\OPT}{\LQD} = 1+ \frac{\EXTRA - \LEXTRA}{\LQD}$. 
Therefore, if we show that 
$\displaystyle \rho\cdot (\EXTRA - \LEXTRA) \le \LQD + O(M)$
for some $\rho > 1$, it will imply a competitive ratio of $1 + 1 / \rho < 2$ for the Longest Queue Drop algorithm.

Let $e_q$ denote the total number of transmitted \OPT-extra packets from queue $q$ over all time steps. Then we have $\EXTRA = \sum_q e_q$ and we will show
\begin{equation}\label{eqn:mainIneq}
\rho\cdot \left(\sum_q e_q - \LEXTRA\right) \le \LQD + O(M) \enspace.
\end{equation}
We now give a high-level overview of the proof of Equation~\eqref{eqn:mainIneq}, which consists of
two parts: (i) splitting the \LQD profit among queues $q$ with $e_q>0$, and
(ii) mapping transmitted \LQD-extra packets to queues $q$ with $e_q>0$.

For~(ii), we use the term $\LEXTRA$ in~\eqref{eqn:mainIneq} to ``cancel out''
some transmitted \OPT-extra packets.
To this end, we will define how each transmitted \LQD-extra packet $p$ is mapped to a queue $q$ (which is different from the one $p$ is transmitted from). Let $m_q$ be the number of transmitted \LQD-extra packets which are mapped to $q$. The mapping will be such that  $\sum_q m_q \le \LEXTRA$ and that $m_q \le e_q$. Define $\hat{e}_q = e_q-m_q \ge0$ as the number of \OPT-extra packets transmitted from queue $q$ which are not canceled out. %

We have $\sum_q e_q - \LEXTRA \le \sum_{q} (e_q - m_q) = \sum_{q} \hat{e}_q$.
Hence, it is sufficient for each $q$ to receive an \LQD profit of at least $\rho\cdot \hat{e}_q$,
from which it follows that $\rho\cdot \sum_{q}\hat{e}_q \le \LQD$,
implying~\eqref{eqn:mainIneq}.  

We describe splitting the \LQD profit, enhanced with a suitable potential, in Section~\ref{s:splitting} and introduce useful quantities for bounding the profit
assigned to a particular queue in Section~\ref{s:liveAndDying}.
Then, in Section~\ref{sec:mapping}, we introduce the mapping of transmitted \LQD-extra packets to queues
and derive a relation between the buffers of \LQD and of \OPT.
Finally, we put the bounds together and optimize the value of $\rho$ in Section~\ref{sec:puttingItAllTogether},
which will yield our upper bound on the \LQD competitive ratio.
Table~\ref{tab:glossary} in Appendix~\ref{sec:notations} provides a list of concepts and notation used throughout the paper, most of which are defined in the subsequent section.

\section{Splitting the \LQD Profit}
\label{s:splitting}

In this section, we explain how the \LQD profit is split.
Before we proceed, %
we introduce some notation and terminology and define a key time step for a queue.
We index the time steps by integers starting from $0$.
When we refer to the state of a queue at time step $t$ under some
algorithm, we refer to the state after all new packets of step $t$
have arrived and after all possible evictions of packets by the algorithm, but
before any packet is transmitted by the algorithm at the end of step $t$.
We use the following notation:
 
\begin{description}%
\item{$s^t_{\OPT}(q)$:} the number of packets in queue $q$ in the \OPT buffer in step $t$,
\item{$s^t_{\LQD}(q)$:} the number of packets in queue $q$ in the \LQD{} buffer in step $t$,
\item{$s^t_{\max} = \max_q s^t_{\LQD}(q)$:} the maximal size of a
  queue in the \LQD buffer in step $t$. 
\end{description}

We say that a queue $q$ is \emph{active} in a time step $t$ if $s^{t}_{\OPT}(q)\ge 1$ or $s^{t}_{\LQD}(q)\ge 1$. Otherwise, if $q$ is empty in both buffers at $t$, we say that $q$ is \emph{inactive} at $t$.
See Figure~\ref{fig:buffer} for an illustration.

\paragraph*{Assumptions on the instance}
We make the following assumption w.l.o.g., which will greatly reduce the additional notation required.
\begin{description}%
\item[(A1)] For any queue $q$ and step $t$, we assume that if $s^t_{\LQD}(q) \le 1$ %
but at least one packet arrived to $q$ at or before time step $t$, then no packet arrives to queue $q$ after step $t$.
(If $s^t_{\LQD}(q) = 1$ then the last packet is transmitted from $q$ in step $t$.)
\end{description}

To see that this assumption is w.l.o.g., we iteratively modify the instance under consideration as follows:
Let $q$ be any queue that does not satisfy this assumption and
let $t$ be the first time step such that $s^t_{\LQD}(q) \le 1$ and there is a packet arriving to $q$ at $t$ or before.
\pvchanged{
	As $q$ does not satisfy the assumption, there is a packet arriving to $q$ after step $t$. %
In the modified instance, all such arriving packets for queue $q$ are instead sent to a new queue $q'$ which is not used in the instance otherwise. Observe that the profit of \LQD does not change after redirecting these packets to a new queue.
On the other hand, the profit of \OPT cannot decrease when we make this change, since any packet arriving to the new queue $q'$
would be stored by \OPT if it is stored before this change
(\OPT could potentially save some capacity in its buffer due to this change as it may sent packets from both $q$ and $q'$).
}
We remark that the new queue is always available as the number of output ports $N$ is not restricted and can be arbitrarily large.
Note that we only make this assumption to simplify our notation and it does not affect the generality of our analysis. Indeed, if the number of output ports used in the original instance is bounded by $N_0$,
then after applying this transformation, there are always at most $N_0$ queues non-empty for $\LQD$ at any one time.

For instance, under assumption~(A1), %
if an \OPT-extra packet is transmitted from a queue $q$ in some step $t$ %
(when $q$ is empty for \LQD), then no packet arrives to $q$ in any step $t' > t$. %

\begin{figure}[t]
\begin{centering}
\begin{tikzpicture}
 \draw[color=blue, pattern = north west lines, pattern color = blue ] (0,0)
  -- (0,6) -- (0.45,6) -- (0.45,0) -- (1,0) -- (1,5.5) -- (1.45,5.5) --
  (1.45,0) -- (2,0) -- (2,5.5) -- (2.45, 5.5) -- (2.45,0) -- (3,0) --
  (3,3.5) -- (3.45,3.5) -- (3.45,0) -- (4,0) -- (4,3.5) -- (4.45,3.5) --
  (4.45,0) -- (5,0) -- (5,3.5) -- (5.45,3.5) -- (5.45,0) -- (6,0) -- (6,3)
  -- (6.45,3) --(6.45,0) -- (7,0) -- (7,2) -- (7.45,2) -- (7.45,0);
  \draw[color = red, pattern = north east lines, pattern color = red]
  (0.55,0) -- (0.55, 4) -- (1,4) -- (1,0) -- (1.55,0)-- (1.55,7) -- (2,7)
  -- (2,0) -- (2.55,0) -- (2.55,2.5) -- (3,2.5)
  -- (3,0) -- (3.55,0) -- (3.55,0.5) -- (4,0.5) -- (4,0) -- (4.55,0) --
  (4.55,0.5) -- (5,0.5) -- (5,0) -- (5.55,0) -- (5.55,3.5) -- (5.55,7) --
  (6,7) -- (6,0) -- (6.55,0) --
  (6.55,6) -- (7,6) -- (7,0) -- (7.55,0) -- (7.55,1.5)
  -- (8,1.5) --
  (8,0) -- (8.55,0) -- (8.55,3.5) -- (9,3.5) -- (9,0);
  \draw[very thick, ->] (0,0) -- (11,0);
  \draw[very thick, ->] (0,0) -- (0,7.5);
  \node () at (0.5,-0.5) {$\downarrow$ $1$};
  \node () at (1.5,-0.5) {$\downarrow$ $2$};
  \node () at (2.5,-0.5) {$\downarrow$ $3$};
  \node () at (3.5,-0.5) {$\downarrow$ $4$}; 
  \node () at (4.5,-0.5) {$\downarrow$ $5$};
  \node () at (5.5,-0.5) {$\downarrow$ $6$};
  \node () at (6.5,-0.5) {$\downarrow$ $7$};
  \node () at (7.5,-0.5) {$\downarrow$ $8$};
  \node () at (8.5,-0.5) {$\downarrow$ $9$};
  \node () at (9.5,-0.5) {$\downarrow$ $10$};
  \node () at (11,-0.5) {queues};
  \draw[very thick, dashed] (1,-0.5) -- (1,7.5);
  \draw[very thick, dashed] (2,-0.5) -- (2,7.5);
  \draw[very thick, dashed] (3,-0.5) -- (3,7.5);
  \draw[very thick, dashed] (4,-0.5) -- (4,7.5);
  \draw[very thick, dashed] (5,-0.5) -- (5,7.5);  
  \draw[very thick, dashed] (6,-0.5) -- (6,7.5);
  \draw[very thick, dashed] (7,-0.5) -- (7,7.5);
  \draw[very thick, dashed] (8,-0.5) -- (8,7.5);
  \draw[very thick, dashed] (9,-0.5) -- (9,7.5);
  \draw[very thick, dashed] (10,-0.5) -- (10,7.5);
  \node () at (-0.8,7.2) {packets};
  \node () at (-0.4,0.25) {$1$};
  \node () at (-0.4,0.75) {$2$};
  \node () at (-0.4,1.25) {$3$};
  \node () at (-0.4,1.75) {$4$};
  \node () at (-0.4,2.25) {$5$};
  \node () at (-0.4,2.75) {$6$};
  \node () at (-0.4,3.25) {$7$};
  \node () at (-0.4,3.75) {$8$};
  \node () at (-0.4,4.25) {$9$};
  \node () at (-0.4,4.75) {$10$};
  \node () at (-0.4,5.25) {$11$};
  \node () at (-0.4,5.75) {$12$};
  \node () at (-0.4,6.25) {$13$};
  \node () at (-0.4,6.75) {$14$};
 \end{tikzpicture}
\caption{An example of the buffer configuration for \LQD and \OPT during some time step
  $t$ while processing the incoming packets, for buffer of size $M=65$. The blue, north-west shaded areas (aligned to the left) correspond to the packets in queues of
  \LQD and the red, north-east shaded areas (aligned to the right) to the queues of \OPT.
For instance, we
  have $s_{\OPT}^t(6)= 14$, and $s_{\LQD}^t(6) = 7$. Furthermore, $s_{\text{max}}^t = 12$
  is the maximal size of a queue for \LQD.
  Note that an \OPT-extra packet is going to be transmitted from queue~$9$ in step $t$,
  and as queue~$9$ is empty for \LQD, no further packet arrives to this queue by assumption~(A1). %
  All the queues with an index $\ge 10$ are inactive (i.e., empty in both buffers).
  According to Definition~\ref{def:overflow}, queues $1, 2,$ and $3$ overflow.
  As an example, assume that further $3$ packets arrive into queue $8$ and the \LQD buffer is already full. Then \LQD
  would first evict a packet from queue $1$ and then select two of the
  queues $1$, $2$ or $3$, dropping one packet from each selected queue.
  }
\label{fig:buffer}
\end{centering}
\end{figure}

\paragraph*{Overflowing queues}
Intuitively, if a packet destined to $q$ is evicted by \LQD at $t$,
then we say that $q$ \emph{overflows}. Furthermore, in such a case, the \LQD buffer is full in step $t$ and
$q$ has $s^t_{\max} - 1$ or $s^{t}_{\max}$ packets at $t$ (see the example in Figure~\ref{fig:buffer}).
This possible difference of 1 in the lengths of two different overflowing queues makes our analysis substantially more involved.\footnote{A less sophisticated version of our proof, which deals with this scenario in a less careful way, only gives an upper bound of about 1.906 on the competitive ratio.
Nevertheless, this analysis still requires the majority of concepts, lemmas, and calculations developed in the paper.}
For technical reasons, we also call a queue $q'$ containing at least $s^t_{\max} - 1$ packets at $t$
overflowing, provided that the \LQD buffer is full and $s^t_{\LQD}(q') \ge 1$, even though there may be no packet for $q'$ that is
evicted at time $t$.

\begin{definition}\label{def:overflow}
We say that a queue $q$ \emph{overflows} in step $t$ if 
\begin{enumerate}[label=(\roman*)]
\item  a packet destined to $q$ is evicted by \LQD at $t$, or
\item  the \LQD buffer is full in step $t$ and $s^t_{\LQD}(q)\ge \max(s^t_{\max} - 1, 1)$,
\end{enumerate}
or both.
\end{definition}

Assumption~(A1) %
implies that once $s^t_{\LQD}(q) \le 1$ but at least one packet arrived to $q$ at or before step $t$, then 
queue $q$ does not overflow after $t$ (as after step $t$, no packet arrives to $q$ and $q$ thus remains empty in the \LQD buffer). 
		We remark that, somewhat counterintuitively, it may even happen that no packet destined to any queue gets evicted at $t$
		but there are still some overflowing queues, provided that the \LQD buffer is full.
		Furthermore, an empty queue $q$ may overflow in some step $t$ if there is a packet destined to $q$ that is evicted at $t$, however,
		as noted above, this queue will not overflow in any step after $t$ (also note that when an empty queue overflows, all queues of \LQD contain at most a single packet).

\paragraph*{Key time step}
Based on assumption~(A1), %
we define the key time step $t_q$ for each queue $q$:
\begin{description}
\item{$t_q$:} the last time step in which queue $q$ overflows; if $q$ does not overflow in any step, we define $t_q = -1$ (recall that we index time steps starting from $0$).
\end{description}
Some important properties follow directly from the definition of $t_q$:
No packet is ever evicted by \LQD from $q$ after $t_q$
and no packet arriving to $q$ after $t_q$ is evicted by \LQD, since
an eviction in some step $t$ implies that the queue overflows at $t$.
We remark that we define $t_q = -1$ for queues $q$ that do not overflow in any step in order to have the property that for such queues, $t_q < t$ for all time steps $t$.

We would like to keep track of how many \OPT-extra packets are yet to be transmitted from a queue,
for which the following notation is useful. 
\begin{description}
\item{$e^t_q$:} the number of \OPT-extra packets transmitted from $q$ in step $t$ or later.%
\item{$\hat{e}^t_q = \max\{e^t_q-m_q,0\}$:} that is, $e^t_q$ adjusted for the packets that are canceled out by transmitted \LQD-extra packets.
	Note that $m_q$ will be specified in Section~\ref{sec:mapping}.\footnote{We remark that 
			defining $m_q$ is somewhat technical and not needed for the case $\LEXTRA = 0$. We have chosen to first present the core part of our analysis which is
	the scheme to split the \LQD profit.}
\end{description}

Note that $e_q = e^0_q = e^{t_q}_q$ as no \OPT-extra packet is transmitted before time $t_q$
by assumption~(A1). %
Thus, $e^t_q$ is constant up to time $t_q$.
After that, it further remains constant 
until $q$ becomes empty for \LQD, and then it decreases by one in each step until it becomes equal to zero.
The same property holds for $\hat{e}^t_q$.
The following useful observation follows from the fact that, by the definition of $t_q$, no packet destined to $q$ gets evicted from $q$ after $t_q$.

\begin{observation}\label{obs:t_q}
For any step $t$ and queue $q$ with $t\ge t_q$ (i.e., that does not overflow after~$t$), it holds that
$\max\left\{s^t_{\OPT}(q)-s^t_{\LQD}(q), 0\right\}\ge e^t_q$.
\end{observation}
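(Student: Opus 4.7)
The plan is to bound $e^t_q$ in two steps: first localizing where \OPT-extras can occur, and then bounding the relevant \OPT-buffer size by $\max\{d,0\}$, where I write $d := s^t_{\OPT}(q) - s^t_{\LQD}(q)$.

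First, I would let $t^* \geq t$ denote the earliest step at which $s^{t^*}_{\LQD}(q) = 0$. Such a step must exist because \LQD incurs no rejections or preemptions from $q$ after $t_q \le t$, so \LQD's $q$-queue only evolves via arrivals plus one transmission per non-empty step, and it must eventually drain. Crucially, as soon as $s^{t'}_{\LQD}(q) \le 1$, assumption~\ref{asm:noReusingOfQueues} forbids any further arrival to $q$ (the hypothesis is met because $s^t_{\OPT}(q) > 0$ forces a prior arrival), so in fact $s^{t'}_{\LQD}(q) = 0$ with no new packet arriving to $q$ for every $t' \ge t^*$. Since an \OPT-extra transmission from $q$ requires $s^{t'}_{\LQD}(q) = 0$, every transmission counted by $e^t_q$ must occur at some $t' \ge t^*$; and with no replenishment after $t^*$ there can be at most $s^{t^*}_{\OPT}(q)$ of them. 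Hence $e^t_q \le s^{t^*}_{\OPT}(q) = s^{t^*}_{\OPT}(q) - s^{t^*}_{\LQD}(q)$.

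Second, I would show by induction on $t' \ge t$ that $D(t') := s^{t'}_{\OPT}(q) - s^{t'}_{\LQD}(q)$ stays $\le \max\{d,0\}$. Observe that $D(t'+1) - D(t')$ equals the transmission contribution at the end of step $t'$, minus the number $r \ge 0$ of packets for $q$ that \OPT rejects or preempts at step $t'+1$ (\LQD does neither, since $t \ge t_q$). The transmission contribution is $0$ if both or neither of \LQD and \OPT transmit from $q$, is $-1$ if only \OPT transmits, and is $+1$ if only \LQD transmits. Thus the only way $D$ can strictly grow is the last case with $r = 0$; but a step in which only \LQD transmits from $q$ has $s^{t'}_{\OPT}(q)=0$ and $s^{t'}_{\LQD}(q)\ge 1$, forcing $D(t') = -s^{t'}_{\LQD}(q) \le -1$, whence $D(t'+1) \le 0 \le \max\{d,0\}$. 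In every other scenario the change is $\le 0$, so the bound is preserved by the inductive hypothesis.

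Combining the two steps yields $e^t_q \le D(t^*) \le \max\{d,0\}$. The main subtlety is the inductive step above: one has to verify that the sole mechanism by which $D$ can increase --- a step where only \LQD transmits from $q$ --- can only activate from configurations in which $D$ is already at most $-1$, so that even after the $+1$ jump the bound $\max\{d,0\}$ is not violated.
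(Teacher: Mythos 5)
Your proof is correct. The paper leaves this observation unproved (it is said to ``follow directly from the definition of $t_q$''), and your two-step argument supplies exactly what is needed. The first step --- that $e^t_q = s^{t^*}_{\OPT}(q)$ where $t^*\ge t$ is the first step at which $q$ empties in \LQD's buffer, since by assumption~\ref{asm:noReusingOfQueues} no packet arrives to $q$ afterwards --- essentially matches the discussion the paper places immediately before the observation. The second step, the monotone invariant on $D(t') := s^{t'}_{\OPT}(q)-s^{t'}_{\LQD}(q)$, is the genuine content, and you handle it correctly: since $t'\ge t\ge t_q$ means \LQD never drops from $q$, the only transition that can increase $D$ is one in which \LQD transmits from $q$, \OPT does not, and \OPT drops nothing from $q$; but this forces $s^{t'}_{\OPT}(q)=0$, hence $D(t')\le -1$, so $D$ cannot climb above $0$ from there, keeping $D\le\max\{d,0\}$ throughout. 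It is worth noting why the invariant is not dispensable: a naive accounting of total transmissions from $q$ over $[t,\infty)$ only bounds the difference between \OPT-extras and \LQD-extras from $q$, not $e^t_q$ itself; your invariant correctly neutralizes the \LQD-extra steps by showing they can only occur when $D$ is already strictly negative.
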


See Fig.~\ref{fig:lifecycle} for an illustration of a life-cycle of a queue.

\begin{figure}[t]
	\begin{centering}
		\begin{tikzpicture}[scale=0.68]
			\newcommand\y{0}
			\draw[thick, ->] (-9,\y-4) -- (10,\y-4);
			\draw[thick, ->] (-9,\y-4) -- (-9,\y+0.5);
			\node[scale=0.6] () at (-9.8,\y+0.2) {packets};
			\node[scale=0.8] () at (-9.8,\y-3.75) {$1$};
			\node[scale=0.8] () at (-9.8,\y-3.25) {$2$};
			\node[scale=0.8] () at (-9.8,\y-2.75) {$3$};
			\node[scale=0.8] () at (-9.8,\y-2.25) {$4$};
			\node[scale=0.8] () at (-9.8,\y-1.75) {$5$};
			\node[scale=0.8] () at (-9.8,\y-1.25) {$6$};
			\node[scale=0.8] () at (-9.8,\y-0.75) {$7$};
			\node[scale=0.8] () at (-9.8,\y-0.25) {$8$};
			\node[scale=0.8] () at (9,\y-4.5) {time $t$};
			\draw[color=blue, pattern = north west lines, pattern color = blue]
			(-8,\y-4) -- (-8,\y-1.5) -- (-7.55,\y-1.5) -- (-7.55,\y-4);
			\draw[color = red, pattern = north east lines, pattern color = red]
			(-7.45,\y-4) -- (-7.45,\y-3.5) -- (-7,\y-3.5) -- (-7,\y-4);
			
			\newcommand\x{-6.5}
			\draw[thick,->] (\x-0.5,\y-3)--(\x,\y-3);
			\draw[color=blue, pattern = north west lines, pattern color = blue]
			(\x,\y-4) -- (\x,\y-2) -- (\x+0.45,\y-2) -- (\x+0.45,\y-4);
			\draw[color = red, pattern = north east lines, pattern color = red]
			(\x+0.55,\y-4) -- (\x+0.55,\y-3.5) -- (\x+1,\y-3.5) -- (\x+1,\y-4);
			\renewcommand\x{-5}
			\draw[thick,->] (\x-0.5,\y-3)--(\x,\y-3);
			\draw[color=blue, pattern = north west lines, pattern color = blue]
			(\x,\y-4) -- (\x,\y-1) -- (\x+0.45,\y-1) -- (\x+0.45,\y-4);
			\draw[color = red, pattern = north east lines, pattern color = red]
			(\x+0.55,\y-4) -- (\x+0.55,\y-3.5) -- (\x+1,\y-3.5) -- (\x+1,\y-4);
			\renewcommand\x{-3.5}
			\draw[thick,->] (\x-0.5,\y-3)--(\x,\y-3);
			\node () at (\x+0.5,\y-3) {\dots};
			\renewcommand\x{-2}
			\draw[thick,->] (\x-0.5,\y-3)--(\x,\y-3);
			\draw[color=blue, pattern = north west lines, pattern color = blue]
			(\x,\y-4) -- (\x,\y-1.5) -- (\x+0.45,\y-1.5) -- (\x+0.45,\y-4);
			\draw[color = red, pattern = north east lines, pattern color = red]
			(\x+0.55,\y-4) -- (\x+0.55,\y-3.5) -- (\x+1,\y-3.5) -- (\x+1,\y-4);
			
			\renewcommand\x{-0.5}
			\draw[thick,->] (\x-0.5,\y-3)--(\x,\y-3);
			\draw[color=blue, pattern = north west lines, pattern color = blue]
			(\x,\y-4) rectangle (\x+0.45,\y-2);
			\draw[color = red, pattern = north east lines, pattern color = red]
			(\x+0.55,\y-4) rectangle (\x+1,\y+0.5);
			\draw[decoration={calligraphic brace, raise=2pt}, thick, decorate] (\x+0.5,\y-1.9) -- (\x+0.5, \y+0.4);
			\node[] () at (\x-0.3,\y-0.85) {{\small $e_{q} \le$}};  %
			\node[] () at (\x+0.5,\y-4.5) {$t_{q}$};

			\renewcommand\x{1}
			\draw[thick,->] (\x-0.5,\y-3)--(\x,\y-3);
			\draw[color=blue, pattern = north west lines, pattern color = blue]
			(\x,\y-4) rectangle (\x+0.45,\y-2.5);
			\draw[color = red, pattern = north east lines, pattern color = red]
			(\x+0.55,\y-4) rectangle (\x+1,\y+0);
			\renewcommand\x{2.5}
			\draw[thick,->] (\x-0.5,\y-3)--(\x,\y-3);
			\draw[color=blue, pattern = north west lines, pattern color = blue]
			(\x,\y-4) rectangle (\x+0.45,\y-3);
			\draw[color = red, pattern = north east lines, pattern color = red]
			(\x+0.55,\y-4) rectangle (\x+1,\y-0.5);
			\renewcommand\x{4}
			\draw[thick,->] (\x-0.5,\y-3)--(\x,\y-3);
			\draw[color=blue, pattern = north west lines, pattern color = blue]
			(\x,\y-4) rectangle (\x+0.45,\y-3.5);
			\draw[color = red, pattern = north east lines, pattern color = red]
			(\x+0.55,\y-4) rectangle (\x+1,\y-1);
			\renewcommand\x{5.5}
			\draw[thick,->] (\x-0.5,\y-3)--(\x,\y-3);
			\draw[color = red, pattern = north east lines, pattern color = red]
			(\x+.55,\y-4) rectangle (\x+1,\y-1.5);
			\renewcommand\x{7}
			\draw[thick,->] (\x-0.5,\y-3)--(\x,\y-3);
			\node () at (\x+0.5,\y-3) {\dots};
			\renewcommand\x{8.5}
			\draw[thick,->] (\x-0.5,\y-3)--(\x,\y-3);
			\draw[color = red, pattern = north east lines, pattern color = red]
			(\x+.55,\y-4) rectangle (\x+1,\y-3.5);
			
		\end{tikzpicture}
		\caption{An example of a life-cycle of a queue $q$ that overflows in some steps,
			with $t_{q}$ being the last such step;
			the queue is depicted similarly as in Fig.~\ref{fig:buffer}.
			Note that $e_{q} = 5$.
			We remark that in ``hard instances'', \OPT would keep just one packet in $q$ before time $t_{q}$,
			while the size of the queue in the \LQD buffer varies. 
			Thus, the queue takes almost no space in the \OPT buffer,
			while it is larger in the \LQD buffer and both \OPT and \LQD gain packets from this queue in every step before $t_{q}$.
			At $t_{q}$, however, the situation reverses: \OPT stores many more packets than \LQD in the queue, 
			and if no packets arrive to this queue after $t_{q}$, \OPT gains a number of \OPT-extra packets.
			Note that $e_{q}$ is upper-bounded by the number of additional packets \OPT stores in $q$ compared to \LQD.
		}
		\label{fig:lifecycle}
	\end{centering}
\end{figure}

\subsection{Warm-Up: Proof of 2-Competitiveness} %
\label{sec:2comp}

\pvchanged{
Using the notation introduced above, we now show a simple proof of 2-competitiveness of \LQD.
Although this proof is still in essence the same as in the previous work~\cite{HahneKM01,AielloKM08}, 
our description is different, being less discrete in that it does not match \OPT-extra packets to specific packets transmitted by \LQD.
This new viewpoint allows us to identify slacks in the analysis and eventually exploit them to obtain a better bound.

Our main idea is that the \LQD profit in each step $t$ will be split among queues $q$ that do not overflow after $t$ proportionally to $e^t_q$, the number of \OPT-extra packets transmitted from $q$ at $t$ or later.
That is, letting $\LQD^t$ be the number of packets transmitted by \LQD at $t$,
we assign to any queue $q$ with $t\ge t_q$ a profit of $\LQD^t\cdot e^t_q / e^t$, where $e^t = \sum_{q:t\ge t_q} e^t_q$.
Clearly, the profit assigned at $t$ equals $\LQD^t$. Thus, it remains to show that each queue $q$ receives
a profit of at least $e_q$ in total over all steps,
which implies~\eqref{eqn:mainIneq} for $\rho = 1$, proving 2-competitiveness
as explained in Section~\ref{s:setup of analysis}.

To this end, we first derive an upper bound on $e^t$; in fact, a simple and possibly loose bound suffices.
Namely, we show that $e^t\le M$. Indeed, any queue $q$ with $t\ge t_q$ satisfies
$\max\left\{s^t_{\OPT}(q)-s^t_{\LQD}(q), 0\right\}\ge e^t_q$ by Observation~\ref{obs:t_q} and in particular, $s^t_{\OPT}(q)\ge e^t_q$. Summing up over all queues $q$ with $t\ge t_q$, we get 
$$M\ge \sum_{q:t\ge t_q} s^t_{\OPT}(q) \ge \sum_{q:t\ge t_q} e^t_q = e^t\,.$$
It follows that a queue $q$ with $t\ge t_q$ receives profit of at least $\LQD^t\cdot e^t_q / M$ at $t$.

Consider a queue $q$ with $e_q > 0$. At time $t_q$, queue $q$ overflows and thus, the \LQD buffer is full.
Furthermore, $s^{t_q}_{\LQD}(q) \ge s^{t_q}_{\max} - 1$. Let $s := s^{t_q}_{\LQD}(q)$. 
Note that the first \OPT-extra packet is transmitted from $q$ at $t_q+s$ or later and thus, for all times $t\in [t_q, t_q+s]$, 
$e^t_q = e_q$, implying that $q$ gets profit of at least $e_q / M$ from each packet transmitted by \LQD in these steps.
Moreover, observe that \LQD sends at least $M$ packets at times $t\in [t_q, t_q+s]$, since
the \LQD buffer is full at $t_q$ and $s^{t_q}_{\max} \le s+1$ (here, we also use that
packets are evicted from longest queues by \LQD). 
Hence, the total profit $q$ gets from steps $t\in [t_q, t_q+s]$ is at least $M\cdot e_q / M = e_q$, 
as desired.

\paragraph*{Intuition for improving upon ratio $2$.}
This simple analysis in fact has three sources of slack that we use to prove better than 2-competitiveness:
\begin{enumerate}[label=(S\arabic*)]
	\item \label{itm:slack_et_M}
		The inequality $e^t \le M$. Specifically, examining the derivation of $e^t \le M$ above,
	we observe that for any queue $q$ with $t\ge t_q$ and $e^t_q > 0$, Observation~\ref{obs:t_q} in fact implies $s^t_{\OPT}(q)\ge e^t_q + s^t_{\LQD}(q)$, but we only used $s^t_{\OPT}(q)\ge e^t_q$.
	\item For a fixed queue $q$ with $e_q > 0$, after step $t_q + s$ for $s = s^{t_q}_{\LQD}(q)$,
	there still may be \OPT-extra packets pending in the \OPT buffer, i.e., we may have $e^t_q > 0$ for $t > t_q + s$,
	so $q$ obtains some \LQD profit.
	\item We do not use \LQD-extra packets to cancel out some \OPT-extra packets.
\end{enumerate}
We note that these slacks in total may be negligible in that the preceding analysis does not give a ratio better than $2$
even when we consider them;
specifically this happens when $e^t$ is very close to $M$, $e_q$ is much smaller than $s = s^{t_q}_{\LQD}(q)$,
and there are no \LQD-extra packets.
However, combining the slacks and carefully modifying this simple scheme to split the \LQD profit,
we will eventually show that $q$ receives a profit of at least $\rho\cdot e_q$ for some $\rho > 1$.

Concretely, we modify this simple scheme of distributing the \LQD profit such that more profit is assigned to ``short'' queues $q$, i.e., those with $e_q \ll s = s^{t_q}_{\LQD}(q)$.
We choose a parameter $\alpha\in (0,1)$ 
and directly assign to $q$ a $(1-\alpha)$-fraction of the profit \LQD gains by transmitting packets from $q$ itself starting at time step $t_q$,
whereas the remaining $\alpha$-fraction of these packets is split proportionally to $\hat{e}^i_q$.
The parameter $\alpha\approx 0.62$ is chosen at the very end of the analysis, so as to minimize the competitive ratio upper bound.
See Fig.~\ref{fig:splitting} for an illustration.
However, analyzing this modified scheme is substantially more involved than the analysis for 2-competitiveness above,
requiring suitable lower bounds for the profit assigned proportionally.

In the following, we more formally describe this new modified scheme with all neccesary technical details.
}

\subsection{The Final Scheme to Split the \LQD Profit}

Before describing the scheme, we define time phases and introduce a potential
that allows for obtaining a better bound.

\paragraph*{Phases}
It will be convenient in certain parts of the analysis to consider time phases instead of time steps. More specifically, let $\tau_{1}<\tau_{2}<...<\tau_{\ell}$ be the time steps in which at least one queue overflows for the last time, i.e., for each $1\leq i\leq \ell$
there is a queue $q$ such that $\tau_{i}=t_{q}\ge 0$. Note that it has to be $\ell>0$ if \OPT gains extra profit (equivalently, $\ell=0$ only if $\EXTRA=0$). We call the time interval $[\tau_{i},\tau_{i+1})$ the \emph{$i$-th phase}; for $i=\ell$, we define $\tau_{\ell+1}=T+1$, where $T$ is the last time step during which any queue is active.
We remark that time steps before $\tau_{1}$ do not belong to any phase \pvchanged{(there are no \OPT-extra packets transmitted before step $\tau_{1}$,
so the \LQD's performance up to step $\tau_1$ is not worse than that of \OPT; moreover, 
$\tau_1$ may be the very first step of the instance).} Finally, observe that for any queue $q$ that overflows at least once (i.e., $t_q \ge 0$), there has to exist an $i$ such that $t_{q}=\tau_{i}$ as this queue overflows at $t_{q}$ for the last time.

In the remainder of the paper, our focus will be mainly on steps $\tau_1, \dots, \tau_\ell$.
For simplicity and to avoid double indexing, we shall write $s^i_{\LQD}(q)$ instead of $s^{\tau_i}_{\LQD}(q)$,
and similarly, we use index $i$ instead of $\tau_i$ in other notations.
Throughout the paper, $i$ will be used solely to index phases and time steps $\tau_1, \dots, \tau_\ell$.

\begin{figure}
	\begin{tikzpicture}[scale=0.83]
		\newcommand\y{0}
		\draw[very thick, ->] (-7,\y-4) -- (8.3,\y-4);
		\draw[very thick, ->] (-7,\y-4) -- (-7,\y+0.5);
		\node[scale=0.8] () at (-7.9,\y+0.2) {packets};
		\node[scale=0.8] () at (-7.8,\y-3.75) {$1$};
		\node[scale=0.8] () at (-7.8,\y-3.25) {$2$};
		\node[scale=0.8] () at (-7.8,\y-2.75) {$3$};
		\node[scale=0.8] () at (-7.8,\y-2.25) {$4$};
		\node[scale=0.8] () at (-7.8,\y-1.75) {$5$};
		\node[scale=0.8] () at (-7.8,\y-1.25) {$6$};
		\node[scale=0.8] () at (-7.8,\y-0.75) {$7$};
		\node[scale=0.8] () at (-7.8,\y-0.25) {$8$};
		\node[scale=0.8] () at (8,\y-4.5) {queues};
		\draw[very thick, dashed] (-6,\y-4) -- (-6,\y+0.5);
		\draw[very thick, dashed] (-5,\y-4) -- (-5,\y+0.5);
		\draw[very thick, dashed] (-4,\y-4) -- (-4,\y+0.5);
		\draw[very thick, dashed] (-3,\y-4) -- (-3,\y+0.5);
		\draw[very thick, dashed] (-2,\y-4) -- (-2,\y+0.5);
		\draw[very thick, dashed] (-1,\y-4) -- (-1,\y+0.5);
		\draw[very thick, dashed] (0,\y-4) -- (0,\y+0.5);
		\draw[very thick, dashed] (1,\y-4) -- (1,\y+0.5);
		\draw[very thick, dashed] (2,\y-4) -- (2,\y+0.5);
		\draw[very thick, dashed] (3,\y-4) -- (3,\y+0.5);
		\draw[very thick, dashed] (4,\y-4) -- (4,\y+0.5);
		\draw[very thick, dashed] (5,\y-4) -- (5,\y+0.5);
		\draw[very thick, dashed] (6,\y-4) -- (6,\y+0.5);
		\draw[very thick, dashed] (7,\y-4) -- (7,\y+0.5);
		\draw[color=blue, pattern = north west lines, pattern color = blue]
		(-7,\y-4) -- (-7,\y-1.5) -- (-6.55,\y-1.5) -- (-6.55,\y-4) --
		(-6,\y-4) -- (-6,\y-1.5) -- (-5.55,\y-1.5) -- (-5.55,\y-4) --
		(-5,\y-4) -- (-5,\y-1.5) -- (-4.55,\y-1.5) -- (-4.55,\y-4) --
		(-4,\y-4) -- (-4,\y-1.5) -- (-3.55,\y-1.5) -- (-3.55,\y-4) --
		(-3,\y-4) -- (-3,\y-1.5) -- (-2.55,\y-1.5) -- (-2.55,\y-4) --
		(-2,\y-4) -- (-2,\y-2.0) -- (-1.55,\y-2.0) -- (-1.55,\y-4) --
		(-1,\y-4) -- (-1,\y-2.0) -- (-0.55,\y-2.0) -- (-0.55,\y-4) --
		( 0,\y-4) -- ( 0,\y-2.0) -- ( 0.45,\y-2.0) -- ( 0.45,\y-4) --
		( 1,\y-4) -- ( 1,\y-2.5) -- ( 1.45,\y-2.5) -- ( 1.45,\y-4) --
		( 2,\y-4) -- ( 2,\y-3.0) -- ( 2.45,\y-3.0) -- ( 2.45,\y-4) --
		( 3,\y-4) -- ( 3,\y-3.5) -- ( 3.45,\y-3.5) -- ( 3.45,\y-4);
		\draw[color = red, pattern = north east lines, pattern color = red]
		(-6.45,\y-4) -- (-6.45,\y-3.5) -- (-6,\y-3.5) -- (-6,\y-4) --
		(-5.45,\y-4) -- (-5.45,\y-3.5) -- (-5,\y-3.5) -- (-5,\y-4) --
		(-4.45,\y-4) -- (-4.45,\y-3.5) -- (-4,\y-3.5) -- (-4,\y-4) --
		(-3.45,\y-4) -- (-3.45,\y-3.5) -- (-3,\y-3.5) -- (-3,\y-4) --
		(-2.45,\y-4) -- (-2.45,\y-3.5) -- (-2,\y-3.5) -- (-2,\y-4) --
		(-1.45,\y-4) -- (-1.45,\y-3.5) -- (-1,\y-3.5) -- (-1,\y-4) --
		(-0.45,\y-4) -- (-0.45,\y-3.5) -- (0,\y-3.5) -- (0,\y-4) --
		(0.55,\y-4) -- (0.55,\y+0.0) -- (1,\y+0.0) -- (1,\y-4) --
		(1.55,\y-4) -- (1.55,\y-0.5) -- (2,\y-0.5) -- (2,\y-4) --
		(2.55,\y-4) -- (2.55,\y-1.0) -- (3,\y-1.0) -- (3,\y-4) --
		(3.55,\y-4) -- (3.55,\y-1.5) -- (4,\y-1.5) -- (4,\y-4) --
		(4.55,\y-4) -- (4.55,\y-2.0) -- (5,\y-2.0) -- (5,\y-4) --
		(5.55,\y-4) -- (5.55,\y-2.5) -- (6,\y-2.5) -- (6,\y-4) --
		(6.55,\y-4) -- (6.55,\y-3.0) -- (7,\y-3.0) -- (7,\y-4) --
		(7.55,\y-4) -- (7.55,\y-3.5) -- (8,\y-3.5) -- (8,\y-4);
		\draw[decoration={calligraphic brace,raise=2pt}, thick, decorate] (-0.2,\y-4) --  (-7,\y-4);
		\node[] () at (-3.6, \y-4.5) {{\small LQD{} profit split equally}}; %
		\draw[decoration={calligraphic brace, raise=2pt}, thick, decorate] (4,\y-4) -- (0, \y-4);
		\node[] () at (2,\y-4.5) {{\small $\alpha$-fraction to the same queue}};  %
		\node[] () at (2,\y-5) {{\small $(1-\alpha)$-fraction split equally}};
\end{tikzpicture}
\caption{Intuition for splitting the \LQD profit in some step $\tau_i$;
	the queues are depicted similarly as in Fig.~\ref{fig:buffer}.
	The queues $q$ on the left, in which \OPT stores just one packet,
	typically satisfy $\tau_i < t_q$, i.e., they overflow after $\tau_i$.
	On the other hand, queues on the right, where \OPT stores more packets than \LQD,
	do not overflow after $\tau_i$ and \OPT transmits some \OPT-extra packets from them.
We note that in hard instances, the maximum size of a queue for \LQD, denoted $s^t_{\max}$,
may fluctuate over time, as witnessed in the lower bound of $\approx 1.44$ in~\cite{bochkov2019new}.}
\label{fig:splitting}
\end{figure}

\paragraph*{Potential}
We introduce a potential that will help
us to deal with the fact that some queues overflowing in step $t$ may only have $s^t_{\max} - 1$ packets and not $s^t_{\max}$ packets.
On an intuitive level, this potential amortizes the profit assignment by moving some profit from phases with a slack
to phases in which our lower bounds on the profit assigned are tight; we develop these bounds in the subsequent sections.

Namely, at any phase $i$, let $\calA^i$ be the set of queues $q$ that are active in step $\tau_i$ and satisfy $t_q > \tau_i$
(i.e., will overflow after the beginning of phase $i$).
Thus, for any such queue $q$ we have $t_q = \tau_j$ for some $j > i$
and consequently, $q$ is non-empty for \LQD in every step during phase $i$ by assumption~(A1) %
(as otherwise, $q$ would not overflow at $\tau_j$).

Then, using the aforementioned parameter $\alpha\in (0,1)$, we define potential %
$\displaystyle \Psi^i := \alpha\cdot |\calA^i|$.
Note that the potential at the beginning is $\Psi^1 \le \alpha\cdot M$ and after the last packet of the input instance is transmitted, the potential equals $\Psi^{\ell+1} = 0$.
We define two quantities which express the change of this potential in phase $i$:
\begin{description}
\item{$u^i = $}  the number of queues active in step $\tau_{i+1}$ that were inactive in step $\tau_i$
	and will overflow after $\tau_{i+1}$, i.e., the number of ``new'' active queues that will overflow after $\tau_{i+1}$; and
\item{$v^i = $} the number of queues that are active in step $\tau_i$ and overflow at $\tau_{i+1}$ for the last time,
i.e., $\tau_{i+1} = t_q$ for any such queue $q$.
\end{description}
Let $\Delta^i \Psi := \Psi^{i+1} - \Psi^i$ be the change of the potential in phase $i$;
observe that $\Delta^i \Psi = \alpha\cdot (u^i - v^i)$.

\paragraph*{Splitting the \LQD profit}
We now formally define our scheme of splitting the \LQD profit such that we assign a profit of at least 
$\rho\cdot \hat{e}_q$ to each $q$. 
To keep track of how much profit we assigned to a queue $q$, we use a counter $\Phi_q$.
In particular, $\Delta^i \Phi_q$ will be the \LQD profit assigned to $q$ in phase $i$ and
$\Phi_q = \sum_{i=1}^\ell \Delta^i \Phi_q$ will be the \LQD profit assigned to $q$ over all phases.
Let $\LQD^i$ be the profit
of \LQD in phase $i$, i.e., the total number of packets transmitted by \LQD in all time steps in $[\tau_i, \tau_{i+1})$.
\pvchanged{
We will ensure that the profit assigned to all queues in phase $i$ is at most $\LQD^i - \Delta^i \Psi$;
summing over all phases $i$, the total profit that we assign is at most $\LQD + \Psi^1 - \Psi^{\ell+1} \le \LQD +\alpha\cdot M$.
}

The crucial part will be to show that, for every queue $q$, $\Phi_q\ge \rho\cdot \hat{e}_q$,
which, together with $\sum_{q} (e_q - \hat{e}_q)= \sum_q m_q \le \LEXTRA$, implies~\eqref{eqn:mainIneq} using
$$
\LQD+\alpha\cdot M \ge \sum_i \LQD^i - \Delta^i \Psi \ge \sum_{q} \Phi_q \ge \sum_{q} \rho\cdot \hat{e}_q
\ge \rho\cdot\left(\sum_{q} e_q - \LEXTRA\right)\,.
$$

Consider phase $i$.
Our first idea is to split $\LQD^{i}- \Delta^i \Psi$
among queues $q$ satisfying $t_q \le \tau_i$ proportionally to $\hat{e}^i_q$,
meaning that we assign a profit of $(\LQD^i- \Delta^i \Psi)\cdot \hat{e}^i_q / \hat{e}^i$ to a queue $q$ with $\tau_i\ge t_q$,
where $\hat{e}^i = \sum_{q: \tau_i\ge t_q} \hat{e}^i_q$. 
Such a scheme is useful because we can relate $\hat{e}^i$ to a certain fraction of the \LQD profit; this is elaborated in Section~\ref{sec:mapping}.
However, as shown in Section~\ref{sec:2comp}, it only proves 2-competitiveness \pvchanged{(the possibly positive contribution of $- \Delta^i \Psi$ may be negligible)}.

We therefore split $\LQD^i$ into two parts: 
 Let $o^i$ be the number of packets that \LQD transmits in the $i$-th phase from queues $q$ with $\tau_i\ge t_q$ and $e_q > 0$,
and let $n^i$ be the number of packets transmitted by \LQD in phase $i$
from all  other queues, i.e., from queues that either overflow after phase $i$ or no \OPT-extra packet
is transmitted from them.
Note that $\LQD^i = o^i + n^i$.

Given a parameter $\alpha\in (0,1)$, in each phase ${i}$ with $\hat{e}^i > 0$, we assign an \LQD profit of
\begin{equation}\label{eqn:splitting profit def init}
\Delta^{i}\Phi_q := \frac{\hat{e}^i_q}{\hat{e}^i} \cdot (n^{i} + \alpha \cdot o^{i} - \Delta^i \Psi)
 \,\, + \,\, (1 - \alpha)\cdot o^i_q
\end{equation}
to each queue $q$ with $\tau_i\ge t_q$ and $\hat{e}^i_q>0$, where $o^i_q$ is the number of packets that \LQD transmits from $q$
during the $i$-th phase \pvchanged{(we note that $\hat{e}^i_q>0$ implies that $e_q > 0$, so $o^i_q$ packets sent from $q$ by \LQD 
in the $i$-th phase are accounted for in $o^i$)}. 
Note that we only assign profit to queues that already
have overflown for the last time. Furthermore, once a queue $q$ with $\tau_i\ge t_q$ is empty in both the \LQD and \OPT buffers at the start of a phase,
it does not get any profit as $\hat{e}^i_q\le e^i_q = 0$ and $o^i_q = 0$.

To ensure feasibility of our scheme, we show that in total over all queues $q$ with $\tau_i\ge t_q$ and $\hat{e}^i_q>0$
we assign a profit of at most $\LQD^i - \Delta^i \Psi$ in phase $i$.
Indeed, using $\hat{e}^i = \sum_{q: \tau_i\ge t_q} \hat{e}^i_q$ and $\sum_{q: \tau_i\ge t_q \text{\ and\ } \hat{e}^i_q>0} o^i_q \le o^i$, we have
\begin{align*}
	\sum_{q: \tau_i\ge t_q \text{\ and\ } \hat{e}^i_q>0} \mkern-29mu\Delta^i \Phi_q
	&= \sum_{q: \tau_i\ge t_q \text{\ and\ } \hat{e}^i_q>0} \left(\,\frac{\hat{e}^i_q}{\hat{e}^i} \cdot (n^{i} + \alpha \cdot o^{i} - \Delta^i \Psi)
	+ (1 - \alpha)\cdot o^i_q\,\right)
	\\
	&\le n^{i} + \alpha \cdot o^{i} - \Delta^i \Psi + (1 - \alpha)\cdot o^i
	\\
	&= n^i+o^i - \Delta^i \Psi = \LQD^i - \Delta^i \Psi\,.
\end{align*}

While the scheme to split the \LQD profit is relatively simple to define,
showing $\Phi_q\ge \rho\cdot \hat{e}_q$ brings technical challenges, namely,
in obtaining suitable lower bounds on the profits assigned proportionally to $\hat{e}^i_q$
and in summing up these lower bounds over all phases.
We get our lower bound based on a novel scheme that relates the buffers of \LQD and of \OPT,
which is introduced in the next two sections.

\section{Live and Let Die}
\label{s:liveAndDying}

We start by deriving a helpful lower bound on 
$n^i + \alpha \cdot o^i - \Delta^i \Psi$. For this, we first introduce the notion of live and dying queues,
which are defined with respect to a fixed queue $q$ with $t_q\le \tau_i$ and $\hat{e}_q > 0$.
For this fixed queue, we need to define live and dying queues up until the first phase that comes after \OPT transmits the last \OPT-extra
packet from $q$ not canceled out by an \LQD-extra packet. Let $j_q := \min\{j : \hat{e}^j_q = 0\}$ be the index $j$ of the earliest step $\tau_j$ in which all remaining \OPT-extra packets to be transmitted from $q$ (if any) are canceled out; as $\hat{e}_q > 0$, index $j_q$ is well-defined (possibly $j_q = \ell+1$, which ends at the last step $\tau_{\ell+1}-1=T$ when a queue is active). %

\begin{definition}\label{def:liveDying}
Fix a queue $q$ with $\hat{e}_q > 0$,
and consider a phase $i$ with $t_q\le \tau_i$ and $i \le j_q$.
Let $q'$ be a queue for which \LQD stores at least one packet at time step $\tau_i$. Queue $q'$ is called \emph{live} with respect to (w.r.t.) queue $q$ at time step $\tau_i$ if
\begin{enumerate}[label=(\roman*)]
\item $\tau_i<t_{q'}$, i.e., $q'$ overflows at some time step after the $i$-th phase, or
\item $e_{q'}=0$ and $s^{j_q}_{\LQD}(q') \ge 1$, 
\end{enumerate}
or both. Otherwise, $q'$ is called \emph{dying} with respect to queue $q$ at time $\tau_i$.
\end{definition}
Note that step $\tau_{j_q}$ referred to in $s^{j_q}_{\LQD}(q')$ is after $t_q$, since $\hat{e}^t_q = \hat{e}_q > 0$
in any step $t$ before the first \OPT-extra packet is transmitted from $q$.
Furthermore, $e_{q'}>0$ implies that $q'$ becomes empty in the \LQD buffer before it becomes empty in the \OPT buffer,
by Observation~\ref{obs:t_q}.
Intuitively, and assuming that $\hat{e}_q = e_q$, at time step
$\tau_i$, a queue $q'$ is dying with respect to $q$ if (i) it no longer
overflows and (ii) either \LQD runs out of packets
to send from $q'$ before the time step \OPT does 
or \LQD runs out of packets to send from $q'$ before the beginning of phase $j_q$.

The definition of live and dying queues implies the following property about transitions between these two types.
This follows since the only property in Definition~\ref{def:liveDying} (for a fixed $q$) that may change with increasing $i$
is whether or not $\tau_i < t_{q'}$.

\begin{observation}\label{obs:liveDying}
If a queue $q'$ is dying (w.r.t.\ queue $q$) in time step $\tau_i$, it will never be live (w.r.t.\ queue $q$) in step $\tau_j$ for any $j > i$.
If $q'$ is live (w.r.t.\ queue $q$) at $\tau_i$, it can become dying (w.r.t.\ queue $q$) in time step $\tau_{i+1}$ only if it overflows in time step $\tau_{i+1}$ for the last time.
\end{observation}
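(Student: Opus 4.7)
The plan is to isolate which parts of Definition~\ref{def:liveDying} depend on the evaluation time $t$ and which are fixed properties of the queue pair $(q,q')$. Condition~(i), $t<t_{q'}$, is the only piece that changes as $t$ varies: once $t$ grows past $t_{q'}$, the inequality flips from true to false and remains false. Condition~(ii) refers to $e_{q'}$, the step $t'$ (determined solely by $q$ as the earliest step with $e^{t'}_q=0$), and $s^{t'}_{\LQD}(q')$ — none of which depend on the time at which we evaluate liveness. Hence~(ii) is either true at every relevant $t$ or false at every $t$.

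For the first claim, I would argue as follows. Suppose $q'$ is dying w.r.t.\ $q$ at step $t$, so neither (i) nor (ii) holds at $t$. For any $\hat t>t$ at which $q'$ still has at least one packet in \LQD's buffer (without this, liveness is not defined), (ii) continues to fail because it is time-independent, while (i) fails since $t_{q'}\le t<\hat t$. Thus $q'$ remains dying, and in particular is not live, at $\hat t$.

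For the second claim, suppose $q'$ is live at $t$ and dying at $t+1$. Because (ii) is time-independent and fails at $t+1$, it must also fail at $t$; hence liveness at $t$ must come from (i), yielding $t<t_{q'}$. On the other hand, dying at $t+1$ means (i) fails there, so $t+1\ge t_{q'}$. Combining these forces $t_{q'}=t+1$, and by the very definition of $t_{q'}$ this says that $q'$ overflows in step $t+1$, as required.

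The main obstacle is not technical but one of careful reading: one has to confirm that the step $t'$ inside condition~(ii) is indeed a fixed function of $q$ and the instance rather than of the time at which liveness is tested, and also that the definition of live/dying only applies to queues that currently hold a packet in \LQD, so both transitions implicitly assume $s^{\hat t}_{\LQD}(q')\ge 1$. Once these points are settled, the observation reduces to the monotonicity of the predicate ``$t<t_{q'}$'' in $t$.
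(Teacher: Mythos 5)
Your proof is correct and follows exactly the same reasoning as the paper, which simply notes that the only time-dependent part of Definition~\ref{def:liveDying} (for a fixed $q$) is whether $t < t_{q'}$; your write-up just spells out the details of that monotonicity argument and the resulting $t_{q'} = t+1$ deduction for the second claim.
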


For any phase $i$, we denote the set of queues that are live in step $\tau_i$ w.r.t.\ $q$ as $\calL^i_q$
and the set of queues dying in step $\tau_i$ w.r.t.\ $q$ as $\calD^i_q$.
For a fixed phase $i$ and queue $q$, the sets $\calL^i_q$ and $\calD^i_q$ partition all queues
in which \LQD stores packets at time $\tau_i$.
Let $d^i_q$ be the number of packets transmitted by \LQD from queues in $\calD^i_q$ during phase $i$.
To get a lower bound on $\Delta^{i}\Phi_q$ defined in~\eqref{eqn:splitting profit def init}, we now relate $n^i + \alpha\cdot o^i$ to $|\calL^i_q|$ and $d^i_q$.

\begin{observation}\label{obs:live_and_dying}
	For any queue $q$,
it holds that $n^i\ge |\calL^i_q|\cdot (\tau_{i+1} - \tau_i)$ %
and also $n^i + \alpha\cdot o^i \ge |\calL^i_q|\cdot (\tau_{i+1} - \tau_i) + \alpha\cdot d^i_q$.
\end{observation}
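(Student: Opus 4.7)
The plan is to prove both inequalities by tracking, step by step within phase $i$, which queues are non-empty in \LQD's buffer and whether \LQD's transmission from that queue is counted by $n^i$ or by $o^i$. The core claim that drives both bounds is that every live queue $q' \in \calL^i_q$ contributes a full block of $\tau_{i+1} - \tau_i$ transmissions to $n^i$ (not merely to $\LQD^i$).

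For the first inequality, I would fix $q' \in \calL^i_q$ and argue two separate things: (a) \LQD transmits a packet from $q'$ in every step of phase $i$, and (b) each such transmission is counted in $n^i$ rather than $o^i$. Part (b) is structural: if $q'$ is live via condition~(i), then $t_{q'} > \tau_i$ forces $q' \notin \calQ^{\tau_i}$; if $q'$ is live via condition~(ii), then $e_{q'} = 0$ directly excludes $q'$ from the set defining $o^i$. Part (a) uses assumption~\ref{asm:noReusingOfQueues}: under condition~(i), $t_{q'}$ is one of the overflow steps $\tau_j$ with $j > i$, hence $t_{q'} \ge \tau_{i+1}$, and since a packet arrives to $q'$ at step $t_{q'}$, assumption~(A2) forces $s^t_{\LQD}(q') \ge 2$ for every $t < t_{q'}$, covering all of phase $i$. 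Under condition~(ii) we have $s^{\tau_i}_{\LQD}(q') \ge 1$ and $s^{t'}_{\LQD}(q') > 0$; if $q'$ were empty at some step $t^\ast$ between $\tau_i$ and $t'$, the contrapositive of~(A2) (together with the fact that $q'$ has previously received packets) would keep $q'$ empty thereafter, contradicting $s^{t'}_{\LQD}(q') > 0$. Summing the $\tau_{i+1} - \tau_i$ transmissions over the disjoint family $\calL^i_q$ yields $n^i \ge |\calL^i_q| \cdot (\tau_{i+1} - \tau_i)$.

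For the second inequality I would bootstrap from the first by partitioning dying queues. Split $\calD^i_q$ into (a) queues $q' \in \calQ^{\tau_i}$ with $e_{q'} > 0$, whose transmissions during phase $i$ are counted in $o^i$, and (b) the remaining dying queues, whose transmissions are counted in $n^i$; let $d^i_{q,a}$ and $d^i_{q,b}$ be the respective counts, so $d^i_q = d^i_{q,a} + d^i_{q,b}$. Because $\calL^i_q$ and $\calD^i_q$ are disjoint, the live contribution to $n^i$ and the type-(b) dying contribution add, giving $n^i \ge |\calL^i_q| \cdot (\tau_{i+1} - \tau_i) + d^i_{q,b}$, while clearly $o^i \ge d^i_{q,a}$. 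Since $\alpha \le 1$ we have $d^i_{q,b} \ge \alpha \cdot d^i_{q,b}$, so combining yields $n^i + \alpha \cdot o^i \ge |\calL^i_q| \cdot (\tau_{i+1} - \tau_i) + d^i_{q,b} + \alpha \cdot d^i_{q,a} \ge |\calL^i_q| \cdot (\tau_{i+1} - \tau_i) + \alpha \cdot d^i_q$.

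The main obstacle I expect is the condition-(ii) case in part (a) above: the A2-style argument only guarantees non-emptiness on the interval $[\tau_i, t']$, so one must ensure that $\tau_{i+1} - 1 \le t'$ whenever the observation is invoked. This is not automatic from $\hat{e}_q > 0$ alone; however, within the splitting scheme we only use the observation in phases with $\hat{e}^i_q > 0$, which gives $e^{\tau_i}_q \ge \hat{e}^i_q > 0$ and hence $\tau_i < t'$. Verifying that this further forces the whole phase to lie within $[\tau_i, t']$ will require a short auxiliary argument using Observation~\ref{obs:t_q} and the structural relationship between the overflow steps $\tau_j$ and the transmission pattern of \OPT-extra packets from $q$; everything else in the proof is bookkeeping.
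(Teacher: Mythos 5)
Your argument tracks the paper's proof closely. For the first inequality you argue, as the paper does, that each queue live w.r.t.\ $q$ at $\tau_i$ transmits in every step of the phase and that these transmissions land in $n^i$; for the second inequality the paper instead notes $o^i \le d^i_q$ (because every queue counted by $o^i$ is dying) and combines this with $|\calL^i_q|(\tau_{i+1}-\tau_i)+d^i_q\le n^i+o^i$, whereas you split $\calD^i_q$ into the $o^i$-counted and $n^i$-counted parts and add the pieces. Both derivations give the same bound and are essentially equivalent; the paper's is marginally shorter. One point worth highlighting: the caveat you flag for condition-(ii) live queues is real. The paper's proof states that any queue live at $\tau_i$ remains live throughout $[\tau_i,\tau_{i+1})$ and appeals only to Observation~\ref{obs:liveDying}, which rules out a live~$\to$~dying transition without an overflow but says nothing about a live queue becoming empty; for a (ii)-live queue, assumption~\ref{asm:noReusingOfQueues} only gives non-emptiness up to $t'$, so some supplementary reasoning (or the observation that the downstream lemmas only exercise the bound on steps with $\hat{e}^t_q>0$, i.e.\ $t<t'$) is indeed needed. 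You correctly identify this and are arguably more careful than the paper's terse one-line justification, though you leave the auxiliary argument as a to-do rather than closing it.
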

\begin{proof}
Recall that $o^i$ is the number of packets that \LQD transmits in phase $i$ from queues $q'$
satisfying $\tau_i\ge t_{q'}$ and $e_{q'}>0$, and that $n^i = \LQD^i - o^i$
(i.e., $n^i$ is the number of packets that \LQD sends in the $i$-th phase from queues $q'$ that will
overflow after $\tau_i$ or that satisfy $e_{q'} = 0$). %
As packets sent from queues that are live w.r.t.\ queue $q$ in step $\tau_i$ are accounted for in $n^i$,
it holds that $n^i\ge |\calL^i_q|\cdot (\tau_{i+1} - \tau_i)$,
which proves the first claim.

Since every queue $q'$ with $\tau_i\ge t_{q'}$ and $e_{q'}>0$ is dying w.r.t.\ queue $q$ at time step $\tau_i$, we have that $o^i \le d^i_q$.
It holds that $|\calL^i_q|\cdot (\tau_{i+1} - \tau_i) + d^i_q \le \LQD^i = n^i + o^i$, %
and this inequality
implies the second claim by using $o^i \le d^i_q$ and $\alpha \le 1$.
\end{proof}

Fix a queue $q$. We would like to lower bound the number $d^i_q$ of packets transmitted from dying queues during the $i$-th phase in some way. Note that the \LQD buffer is full at times $\tau_i$ and $\tau_{i+1}$. Suppose for a moment that the set of live queues (w.r.t.\ queue $q$) does not change between step $\tau_i$ and step $\tau_{i+1}$, i.e., $\calL^{i+1}_q=\calL^i_q$. Now, if the number of packets that \LQD stores in live queues $\calL^i_q$ increases by $m$ between step $\tau_i$ and step $\tau_{i+1}$, then we know that $d^i_q\ge m$. This is because the buffer is full, so if the live queues gain $m$ packets, then dying queues must have lost at least $m$ packets (possibly more if there are new dying queues in step $\tau_{i+1}$). Since dying queues do not overflow, the only possible way to reduce the number of packets stored by \LQD in dying queues is to transmit them.

We now formalize this intuition and handle cases where the set of live queues changes from one phase to the next.
For the fixed queue $q$ and each phase $i$ such that $\tau_i\ge t_q$ and $i\le j_q$, we define
\begin{equation}\label{eqn:sigma_def}
\sigma^i_q = 
\begin{cases}
\left(\sum_{q'\in \calL^i_q} s^i_{\LQD}(q')\right) / |\calL^i_q|, &\text{if\ }|\calL^i_q|\ge 1,
\\
1,& \text{otherwise}\,. 
\end{cases}
\end{equation}
In words, $\sigma^i_q$ equals the average number of \LQD packets in live queues (w.r.t.\ queue $q$) in step $\tau_i$,
provided that there is at least one such queue.
Since live queues are non-empty for \LQD, it holds that $\sigma^i_q\ge 1$.
Furthermore, as the average is at most the maximum and as the maximum is an integer, this gives us the following observation.
\begin{observation}\label{obs:sigma_upper_bound}
In any phase $i$ such that $\tau_i\ge t_q$ and $i\le j_q$, it holds that $\lceil\sigma^i_q\rceil\le s^i_{\max}$.
\end{observation}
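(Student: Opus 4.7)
The plan is to handle the two cases in the definition of $\sigma^i_q$ separately, since the observation is essentially an averaging argument combined with the fact that $s^i_{\max}$ is an integer.

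First I would consider the case $|\calL^i_q|\ge 1$. By definition, $\sigma^i_q$ is the arithmetic mean of the values $s^i_{\LQD}(q')$ over $q'\in\calL^i_q$. Each of these values is a nonnegative integer, and by the definition of $s^i_{\max}$ we have $s^i_{\LQD}(q')\le s^i_{\max}$ for every queue $q'$. Therefore
\[
\sigma^i_q \;=\; \frac{1}{|\calL^i_q|}\sum_{q'\in\calL^i_q} s^i_{\LQD}(q') \;\le\; s^i_{\max}.
\]
Since $s^i_{\max}$ is itself an integer, taking ceilings yields $\lceil\sigma^i_q\rceil\le s^i_{\max}$, as required.

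Next I would handle the case $|\calL^i_q| = 0$, where $\sigma^i_q = 1$ by definition and so $\lceil\sigma^i_q\rceil = 1$. It suffices to argue $s^i_{\max}\ge 1$. This follows because $\tau_i$ is a time step in which at least one queue overflows for the last time; Definition~\ref{def:overflow} requires the \LQD buffer to be full at $\tau_i$, so in particular some queue is non-empty for \LQD and hence $s^i_{\max}\ge 1$ (in fact, by assumption~(A1), $s^i_{\max}\ge 2$, but only the weaker bound is needed here).

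There is no real obstacle to this argument: the only mildly subtle point is the second case, where one has to remember that a phase step $\tau_i$ is by construction an overflow step, so the \LQD buffer is non-empty and $s^i_{\max}\ge 1$. The observation then follows by combining the two cases.
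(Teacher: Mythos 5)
Your proof is correct and takes essentially the same approach as the paper, which justifies the observation in one line by noting that the average is at most the maximum and that $s^i_{\max}$ is an integer. You additionally handle the degenerate case $|\calL^i_q| = 0$ explicitly, which the paper leaves implicit, but this does not change the underlying argument.
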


We will also need that $\sigma^i_q\ge 2$ once there is at least one live queue (w.r.t.\ queue $q$) in phase $i$.

\begin{observation}\label{obs:sigma_lower_bound}
For each $i$ with $\tau_i\ge t_q$, $i < j_q$, and $|\calL^i_q|\ge 1$, it holds that $\sigma^i_q\ge 2$.
\end{observation}
\begin{proof}
We show that any live queue $q'$ (w.r.t.\ queue $q$) has at least two packets in the \LQD buffer in any step $\tau_i\ge t_q$ with $i < j_q$, which is sufficient as $\sigma^i_q$ is the average size of live queues, provided that $|\calL^i_q|\ge 1$.
For a live queue $q'$, consider two cases (as in Definition~\ref{def:liveDying}):

First, if $\tau_i < t_{q'}$ we show $s^i_{\LQD}(q') \ge 2$ by a contradiction: %
If we had $s^i_{\LQD}(q') \le 1$, then $q'$ would be empty at $t_{q'}$, no packet would arrive to $q'$ at $t_{q'}$ by assumption~(A1), and thus, $q'$ would not overflow in that step.
Second, if $e_{q'} = 0$ and $s^{j_q}_{\LQD}(q') \ge 1$, where $j_q = \min\{j : \hat{e}^j_q = 0\}$,
then we have that $s^i_{\LQD}(q')\ge 2$, using assumption~(A1) again together with $i < j_q$. %
\end{proof}

\paragraph*{Packets transmitted from dying queues}
We can now formally state our lower bound on the number of packets transmitted from dying queues, taking into account
the change of the potential as well.
As a byproduct (by rearranging the bound on $d^i_q$ below), we obtain an upper bound on $u^i$,
the number of ``new'' active queues that will overflow after $\tau_{i+1}$, which captures the increase of the potential.
Recall that $v^i$ equals the number of queues that are active in step $\tau_i$ and overflow at $\tau_{i+1}$ for the last time.
Note that the following bound is only useful when the value $\sigma_q$ increases during phase $i$, i.e., $\sigma^{i+1}_q > \sigma^i_q$,
or there are relatively many ``new'' active queues that overflow after $\tau_{i+1}$.

\begin{lemma}\label{lem:dying_bound}
Consider any queue $q$ with $\hat{e}_q > 0$.
For each phase $i$ with $\tau_i\ge t_q$ and $\hat{e}^i_q > 0$, %
it holds that
$d^i_q \ge (\sigma^{i+1}_q - \sigma^i_q)\cdot |\calL^i_q| + \sigma^{i+1}_q\cdot u^i - v^i$.
\end{lemma}

\begin{proof}
As $q$ is fixed, we consider live and dying queues w.r.t.\ queue $q$ only.
For simplicity, let $\tau = \tau_i$ and $\tau' = \tau_{i+1}$, thus the $i$-th phase is $[\tau, \tau')$.
By the definition of $\sigma^i_q$, live queues contain $\sigma^i_q\cdot |\calL^i_q|$ packets in the \LQD buffer in step $\tau$
and thus, dying queues $\calD^i_q$ have $M - \sigma^i_q\cdot |\calL^i_q|$ packets in total at $\tau$, since the \LQD buffer is full in step $\tau$.
As dying queues do not overflow in any step after $\tau$, it is sufficient
to show that queues $\calD^i_q$ altogether contain at most
$M - \sigma^i_q\cdot |\calL^i_q| - (\sigma^{i+1}_q - \sigma^i_q)\cdot |\calL^i_q| - \sigma^{i+1}_q\cdot u^i + v^i = M - \sigma^{i+1}_q\cdot |\calL^i_q| - \sigma^{i+1}_q\cdot u^i + v^i$
packets in \LQD's buffer in step $\tau'$. Let $x$ be the number of \LQD packets in queues $\calD^i_q$ in step $\tau'$,
so our goal is to show
\begin{equation}\label{eqn:dying bound goal}
x \le M - \sigma^{i+1}_q\cdot |\calL^i_q| - \sigma^{i+1}_q\cdot u^i + v^i\,.
\end{equation}

To this end, we analyze the \LQD buffer in step $\tau'$.
By Observation~\ref{obs:liveDying}, any live queue in $\calL^i_q$ is also live in step $\tau'$
or overflows in step $\tau'$ (possibly both).
Let $\calL' = \calL^i_q\cup \calL^{i+1}_q$ be the set of queues that are live in step $\tau$ or in step $\tau'$.
Observe that $\calL'\cap \calD^i_q = \emptyset$,
since by Observation~\ref{obs:liveDying} dying queues may only become empty in \LQD's buffer but not live.
It follows that queues in $\calL'\setminus \calL^i_q$ must be inactive in step $\tau$. 
Next, no queue that is live in step $\tau$ is empty for \LQD in step $\tau' = \tau_{i+1}$ by Definition~\ref{def:liveDying},
using $i < j_q$, which follows from $\hat{e}^i_q > 0$.
Finally, note that $\calL'$ may contain some dying queues in $\calD^{i+1}_q$, but all of them must overflow at $\tau'$,
and that $\calL'\setminus \calL^{i+1}_q \subseteq \calD^{i+1}_{q}\setminus \calD^i_{q}$.
Concluding, set $\calL'$ consists of three disjoint types of queues: %
\begin{enumerate}[nosep,label=(\roman*)]
\item live queues in step $\tau$ that remain live in step $\tau'$, i.e., $ \calL^i_q\cap \calL^{i+1}_q$,
\item live queues in step $\tau$ that become dying in step $\tau'$ --- these are queues in $\calL'\setminus \calL^{i+1}_q$
and we have that $|\calL'\setminus \calL^{i+1}_q| = v^i$, which follows from Observation~\ref{obs:liveDying}
and from the definition of $v^i$, and 
\item queues inactive in step $\tau$ that are live in step $\tau'$ --- these are queues in $\calL'\setminus \calL^i_q$
and there are at least $u^i$ many of them (some live queues may not overflow in any step, so they are not accounted for in $u^i$).
\end{enumerate}
See Figure~\ref{fig:live_die} for an illustration.

\begin{figure}[t]
\begin{centering}
\begin{tikzpicture}[scale=0.92,level/.style={},decoration={brace,mirror,amplitude=7}]
  \draw[color = red, pattern = north east lines, pattern color = red]
  (0,0) -- (0, 0.5) -- (1,0.5) -- (1,1.5) -- (2,1.5) -- (2,2.5) -- (3,2.5) -- (3,3) -- (4,3) -- (4,0);
  \filldraw[color=green!10!black, fill = green!10!white ]
  (4,0) -- (4,4.5) -- (5,4.5) -- (5,0);
  \draw[color=blue, pattern = north west lines, pattern color = blue ]
  (5,0) -- (5,5) -- (7,5) -- (7,4.5) -- (9,4.5) -- (9,3) -- (10,3) -- (10, 2.5) -- (11, 2.5) -- (11, 3.5) -- (12, 3.5) -- (12,0);
  \draw[very thick, ->] (0,0) -- (13,0);
  \draw[very thick, ->] (0,0) -- (0,5.5);
  \node () at (0.5,-0.5) {$\downarrow$ $1$};
  \node () at (1.5,-0.5) {$\downarrow$ $2$};
  \node () at (2.5,-0.5) {$\downarrow$ $3$};
  \node () at (3.5,-0.5) {$\downarrow$ $4$}; 
  \node () at (4.5,-0.5) {$\downarrow$ $5$};
  \node () at (5.5,-0.5) {$\downarrow$ $6$};
  \node () at (6.5,-0.5) {$\downarrow$ $7$};
  \node () at (7.5,-0.5) {$\downarrow$ $8$};
  \node () at (8.5,-0.5) {$\downarrow$ $9$};
  \node () at (9.5,-0.5) {$\downarrow$ $10$};
  \node () at (10.5,-0.5) {$\downarrow$ $11$};
  \node () at (11.5,-0.5) {$\downarrow$ $12$};
  \node () at (13,-0.5) {queues};
  \draw[very thick, dashed] (1,-0.5) -- (1,5.5);
  \draw[very thick, dashed] (2,-0.5) -- (2,5.5);
  \draw[very thick, dashed] (3,-0.5) -- (3,5.5);
  \draw[very thick, dashed] (4,-0.5) -- (4,5.5);
  \draw[very thick, dashed] (5,-0.5) -- (5,5.5);  
  \draw[very thick, dashed] (6,-0.5) -- (6,5.5);
  \draw[very thick, dashed] (7,-0.5) -- (7,5.5);
  \draw[very thick, dashed] (8,-0.5) -- (8,5.5);
  \draw[very thick, dashed] (9,-0.5) -- (9,5.5);
  \draw[very thick, dashed] (10,-0.5) -- (10,5.5);
  \draw[very thick, dashed] (11,-0.5) -- (11,5.5);
  \draw[very thick, dashed] (12,-0.5) -- (12,5.5);
  \node () at (-0.8,5.2) {packets};
  \node () at (-0.4,0.25) {$1$};
  \node () at (-0.4,0.75) {$2$};
  \node () at (-0.4,1.25) {$3$};
  \node () at (-0.4,1.75) {$4$};
  \node () at (-0.4,2.25) {$5$};
  \node () at (-0.4,2.75) {$6$};
  \node () at (-0.4,3.25) {$7$};
  \node () at (-0.4,3.75) {$8$};
  \node () at (-0.4,4.25) {$9$};
  \node () at (-0.4,4.75) {$10$};
  
  \draw [decorate] (0,-0.7) --node[below=2mm]{$\calD^{i+1}_q$} (5,-0.7);
  \draw [decorate] (5,-0.7) --node[below=2mm]{$\calL^{i+1}_q$} (12,-0.7);
  \draw [decorate] (0,-1.5) --node[below=2mm]{$\calD^i_q$} (4,-1.5);
  \draw [decorate] (4,-1.5) --node[below=2mm]{$\calL^i_q$} (11,-1.5);
\end{tikzpicture}
\caption{An example of the \LQD buffer in step $\tau' = \tau_{i+1}$ for illustrating the proof of Lemma~\ref{lem:dying_bound}.
	Note that $s^{i+1}_{\max} = 10$ and that
	queues $5-9$ overflow (and thus the \LQD buffer is full).
	However, only queue~5 becomes dying as it overflows for the last time at $\tau'$, i.e., $t_5 = \tau'$.	
	Moreover, queue~12 was empty in step $\tau_i$ and queue~1 will become empty for \LQD just after packets are transmitted in step $\tau'$ (note that no further packets will arrive to queue~1 after step $\tau'$ by assumption~(A1)). %
	We have that $\calL' = \{5, 6, \dots, 12\}$.
	Finally, $\sigma^{i+1}_q = 56/7 = 8$, since %
	there are 56 packets in 7 live queues $\calL^{i+1}_q$.
	}
\label{fig:live_die}
\end{centering}
\end{figure}

Any queue in $\calL' \setminus\calL^{i+1}_q$ must overflow at $\tau'$,
so it has at least $s^{i+1}_{\max} - 1\ge \sigma^{i+1}_q - 1$ \LQD packets at $\tau'$, where we use $\sigma^{i+1}_q\le s^{i+1}_{\max}$
by Observation~\ref{obs:sigma_upper_bound}. 
It follows that queues in $\calD^{i+1}_q$ have at least
$x + |\calL' \setminus\calL^{i+1}_q|\cdot (\sigma^{i+1}_q - 1)$ packets in total in step $\tau'$.
By the definition of $\sigma^{i+1}_q$ and since \LQD's buffer is full at $\tau'$,
queues in $\calD^{i+1}_q$ contain $M - |\calL^{i+1}_q|\cdot \sigma^{i+1}_q$ packets
and thus $$x + |\calL' \setminus\calL^{i+1}_q|\cdot (\sigma^{i+1}_q - 1) \le M - |\calL^{i+1}_q|\cdot \sigma^{i+1}_q\,.$$
Rearranging and using $|\calL'\setminus \calL^{i+1}_q| = v^i$, we get
$x \le M - |\calL'|\cdot \sigma^{i+1}_q + v^i$. 
Using $|\calL'| = |\calL^i_q| + |\calL'\setminus \calL^i_q| \ge |\calL^i_q| + u^i$,
we obtain $x \le M - (|\calL^i_q| + u^i)\cdot \sigma^{i+1}_q + v^i$, implying~\eqref{eqn:dying bound goal}.
This concludes the proof as explained above.
\end{proof}

We now give two more upper bounds on $u^i$, the number of ``new'' active queues that will overflow after $\tau_{i+1}$.
The advantage of the following bound over the one from Lemma~\ref{lem:dying_bound} is that 
it does not use $\sigma^{i+1}_q$, i.e., it is suitable for phase $i = j_q - 1$.

\begin{lemma}\label{lem:new active UB first}
Consider any queue $q$ with $\hat{e}_q > 0$.
For each phase $i$ with $\tau_i\ge t_q$ and $\hat{e}^i_q > 0$,
it holds that $u^i \le \frac12\cdot \left(|\calL^i_q|\cdot (\sigma^i_q - 1) + d^i_q\right)$.
\end{lemma}

\begin{proof}
As $q$ is fixed, we consider live and dying queues w.r.t.\ queue $q$ only.
Let $x$ be the number of \LQD packets in queues $\calD^i_q$ in step $\tau_{i+1}$.
Similarly as in the proof of Lemma~\ref{lem:dying_bound}, we show that
\begin{equation}\label{eqn:new active UB first goal}
x \le M - |\calL^i_q| - 2\cdot u^i\,.
\end{equation}
This equation implies the lemma, since dying queues $\calD^i_q$ have $M - \sigma^i_q\cdot |\calL^i_q|$ packets in total in step $\tau_i$
and thus, $d^i_q = M - \sigma^i_q\cdot |\calL^i_q| - x \ge M - \sigma^i_q\cdot |\calL^i_q| - (M - |\calL^i_q| - 2\cdot u^i)
= 2\cdot u^i - (\sigma^i_q - 1)\cdot |\calL^i_q|$
by~\eqref{eqn:new active UB first goal}, from which the lemma follows by rearranging.
To justify~\eqref{eqn:new active UB first goal}, 
queues accounted for in $u^i$ are empty for \LQD at $\tau_i$ and overflow after $\tau_{i+1}$, so \LQD must store at least two packets in each of them in step $\tau_{i+1}$ by assumption~(A1).
Moreover, no queue that is live in step $\tau_i$ is empty for \LQD in step $\tau_{i+1}$ by Definition~\ref{def:liveDying},
using $i < j_q$, which follows from $\hat{e}^i_q > 0$.
Hence, there can be at most $M - |\calL^i_q| - 2\cdot u^i$ \LQD packets in queues $\calD^i_q$ in step $\tau_{i+1}$.
\end{proof}

Finally, we give a third upper bound on $u^i$ that is incomparable to those in Lemmas~\ref{lem:dying_bound}
and~\ref{lem:new active UB first} and is useful for phases with a relatively small number of steps.

\begin{lemma}\label{lem:new active UB simple}
For any phase $i$, it holds that $u^i \le \frac12\cdot (n^i + o^i)$.
\end{lemma}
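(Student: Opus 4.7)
The plan is a packet-counting argument centered on the $u^i$ new queues. First I would show that each $q'\in U$ (with $|U|=u^i$) satisfies $s^{\tau_{i+1}}_{\LQD}(q')\ge 2$. This follows from assumption~\ref{asm:noReusingOfQueues}: $q'$ is active at $\tau_{i+1}$ (so some packet arrived at or before $\tau_{i+1}$) and overflows at $t_{q'}>\tau_{i+1}$ (so a further packet arrives then); if $s^{\tau_{i+1}}_{\LQD}(q')\le 1$ held, \ref{asm:noReusingOfQueues} would forbid the later arrival. Summing over $q'\in U$ gives at least $2u^i$ \LQD{} packets located in new queues at $\tau_{i+1}$, and since these queues were empty at $\tau_i$, all of these packets arrived during phase~$i$.

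Next, invoking that the \LQD{} buffer is full at both $\tau_i$ and $\tau_{i+1}$ (each is an overflow step), a simple conservation argument shows that at least $2u^i$ packets present at $\tau_i$ must have left the buffer during the phase, either via transmission or via preemption. Writing $T_O$ and $P_O$ for the transmissions and preemptions of packets already present at $\tau_i$, we have $T_O+P_O\ge 2u^i$. Since $\LQD^i = n^i + o^i = T_O + T_A$, where $T_A$ is the number of transmitted packets that arrived during the phase, proving $\LQD^i \ge 2u^i$ reduces to showing $T_A\ge P_O$.

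The step $T_A \ge P_O$ is the main obstacle, and it is precisely where the hypothesis $\tau_{i+1}-\tau_i\le \sigma^i_q-2$ is used. By Observation~\ref{obs:sigma_lower_bound} and assumption~\ref{asm:noReusingOfQueues}, live queues w.r.t.\ $q$ have average size at least $\tau_{i+1}-\tau_i+2$ at $\tau_i$ and cannot drop below $2$ packets during the phase, so each of them transmits at every one of the $\tau_{i+1}-\tau_i$ steps of the phase, as in Observation~\ref{obs:live_and_dying}. Because \LQD{} preempts from the tail of the longest queue, an old preemption requires the affected queue to have no new arrival still at its tail; in a sufficiently short phase one can charge each such preemption to a distinct transmission of a packet that arrived during the phase (which must subsequently be transmitted rather than remain stored, since otherwise the live queues could not have maintained their length). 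The principal difficulty is carrying out this matching carefully, and I expect it to require case analysis separating preemptions from "stale" queues with no phase arrivals from preemptions whose affected queue did receive but also lost new packets.
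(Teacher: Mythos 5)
Your accounting reduction is sound and in fact a slightly cleaner way to organize the bookkeeping than the paper's: from $M$ packets at $\tau_i$ and $M$ at $\tau_{i+1}$, the new packets in the buffer at $\tau_{i+1}$ number exactly $T_O + P_O$, the packets of $\calU$ are among them and number at least $2u^i$, and since $\LQD^i = T_O + T_A$, the lemma reduces to $T_A \ge P_O$. Steps (1)--(2) of your proposal also match the paper exactly. But the step you flag as the ``main obstacle'' is exactly where the paper's essential idea lives, and your sketch does not contain it. The paper does not attempt to match preemptions of old packets to transmissions of new ones; instead it shows the much stronger fact that \emph{no} queue $q'$ loses more than one packet per step during the phase, i.e., $s^{\tau_{i+1}}_{\LQD}(q') \ge s^{\tau_i}_{\LQD}(q') - (\tau_{i+1} - \tau_i)$ for every $q'$ (equivalently, no packet present at $\tau_i$ is preempted). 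The proof is a contradiction that hinges on the reference queue $\bar{q}$ with $t_{\bar{q}} = \tau_i$: if some $q'$ dipped below its trajectory, at the first step $t$ where this happens a preemption occurs from $q'$ with $s^t_{\LQD}(q') \le s^i_{\max} - (t-\tau_i) - 1$; but $\bar{q}$ has at least $s^i_{\max} - 1$ packets at $\tau_i$ and, not overflowing after $\tau_i$, shrinks by at most one per step, so $s^t_{\LQD}(\bar{q}) \ge s^t_{\LQD}(q') \ge s^t_{\max} - 1$ and --- using the hypothesis $\tau_{i+1} - \tau_i \le \sigma^i_q - 2 \le s^i_{\max} - 2$ --- also $s^t_{\LQD}(\bar{q}) \ge 1$, forcing $\bar{q}$ to overflow at $t > t_{\bar{q}}$, a contradiction.

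Your proposed route to $T_A \ge P_O$ never invokes $\bar{q}$ and instead leans on live queues maintaining their length, but this is not a valid deduction: Observation~\ref{obs:live_and_dying} only gives a lower bound on $n^i$ via transmissions from live queues, and it does not force any new arrival to be transmitted rather than remain stored through $\tau_{i+1}$. Indeed, there is no general relation between $T_A$ and $P_O$ without the hypothesis, and the way the hypothesis enters is precisely through the observation that $\bar{q}$ cannot have shrunk to zero within $\sigma^i_q - 2 \le s^i_{\max} - 2$ steps; your use of the hypothesis (``average live-queue size is at least $\tau_{i+1}-\tau_i+2$'') does not lead anywhere toward the matching you describe. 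To complete your variant, you would want to prove $P_O = 0$ (which trivially yields $T_A \ge P_O$) via the $\bar{q}$ argument above; the ``careful matching with case analysis'' you anticipate is not needed once this is in place.
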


\begin{proof}
Recall that the new active queues accounted for in $u^i$ are empty in both buffers in step $\tau_i$ and will
overflow after $\tau_{i+1}$; let $\calU$ be the set of these $u^i$ queues.
We show that the number of packets in queues $\calU$ in step $\tau_{i+1}$ is at most $n^i + o^i$, i.e.,
the number of packets \LQD transmits during the $i$-th phase.
This is sufficient, since any queue in $\calU$ has at least two \LQD packets at $\tau_{i+1}$
(otherwise, if there was a queue in $\calU$ with at most one \LQD packet in step $\tau_{i+1}$, no packets would arrive after $\tau_{i+1}$ to this queue by assumption~(A1) %
and consequently, it would not overflow after $\tau_{i+1}$ by Definition~\ref{def:overflow}).
We suppose that $n^i + o^i < M$, since otherwise, the claim holds as queues $\calU$ have in total at most $M$ packets.

Since the \LQD buffer is full in step $\tau_i$,
it is sufficient to observe that any queue $q'$ has at least $s^i_{\LQD}(q') - (\tau_{i+1} - \tau_i)$ packets
in step $\tau_{i+1}$ in the \LQD buffer; more precisely, that packets present in $q'$ at $\tau_i$ are not evicted until step $\tau_{i+1}$.
Suppose for a contradiction that $s^{i+1}_{\LQD}(q') \le s^i_{\LQD}(q') - (\tau_{i+1} - \tau_i) - 1$.
Thus, there must be a step $t\in (\tau_i, \tau_{i+1}]$ such that a packet is evicted from $q'$ at $t$
and $s^t_{\LQD}(q') \le s^i_{\LQD}(q') - (t - \tau_i) - 1$; we take the first such step if there are more.
Note that $t - \tau_i \le s^i_{\max} - 1$ as otherwise, $t - \tau_i \ge s^i_{\max}$ and thus,
$s^t_{\LQD}(q') \le s^i_{\LQD}(q') - (t - \tau_i) - 1 \le s^i_{\max} - s^i_{\max} - 1 = -1$,
which is not possible.

First, suppose that $t - \tau_i = s^i_{\max} - 1$. Then it must be the case that $s^i_{\LQD}(q') = s^i_{\max}$, $s^t_{\LQD}(q') = 0$, and $s^t_{\max} = 1$,
since a packet pending at $\tau_i$ is evicted from $q'$ at $t$. 
This implies that $M$ packets are transmitted by \LQD at $t$. If $t < \tau_{i+1}$, then $n^i + o^i \ge M$, which 
we suppose is not the case. Otherwise, $t = \tau_{i+1}$ and since $s^t_{\max} = 1$, no queue active at $t$ is non-empty for \LQD after $t$ by Assumption~(A1) and thus, $u^i = 0$ and the lemma clearly holds.

Otherwise, we have that $t - \tau_i \le s^i_{\max} - 2$.
As a packet is evicted from $q'$ at $t$, it holds that
\begin{equation}\label{eqn:new active UB simple 1}
s^t_{\LQD}(q') \le s^i_{\max} - (t - \tau_i) - 1\,.
\end{equation}
Recall that there is a queue $\bar{q}$ with $\tau_i = t_{\bar{q}}$, i.e., which overflows for the last time at $\tau_i$.
At $\tau_i$, queue $\bar{q}$ has at least $s^i_{\max} - 1$ packets and as no packet is evicted from $\bar{q}$ after $\tau_i = t_{\bar{q}}$, we get
\begin{equation}\label{eqn:new active UB simple 2}
s^t_{\LQD}(\bar{q})\ge s^i_{\max} - 1 - (t - \tau_i)\,.
\end{equation}
Combining this with~\eqref{eqn:new active UB simple 1}, we obtain $s^t_{\LQD}(\bar{q})\ge s^t_{\LQD}(q')$.
Using $t - \tau_i \le s^i_{\max} - 2$ together with~\eqref{eqn:new active UB simple 2}, we obtain $s^t_{\LQD}(\bar{q})\ge 1$.
As a packet is evicted from $q'$ at $t$, queue $q'$ overflows at $t$,
i.e., it has at least $s^t_{\max} - 1$ many \LQD packets. Thus, $s^t_{\LQD}(\bar{q})\ge s^t_{\max} - 1$
and by Definition~\ref{def:overflow}, $\bar{q}$ overflows at $t$ (here, we also use that the \LQD buffer is full at $t$
as $q'$ overflows and that $s^t_{\LQD}(\bar{q})\ge 1$). However, this contradicts $t_{\bar{q}} = \tau_i < t$.

Hence, any queue $q'$ has at least $s^i_{\LQD}(q') - (\tau_{i+1} - \tau_i)$ \LQD packets in step $\tau_{i+1}$
and the number of \LQD packets in queues $\calU$ in step $\tau_{i+1}$ is at most $n^i + o^i$, which concludes the proof,
as explained above.
\end{proof}

\section{Mapping Transmitted \LQD-extra Packets}
\label{sec:mapping}

So far we derived a lower bound on
$n^i+\alpha\cdot o^i$ for a phase $i$ which depends, among other things, on the number
of queues $|\calL^i_q|$ which are live w.r.t.\ a queue $q$ at time $\tau_i$. To make this bound useful, we now would like to relate $|\calL^i_q|$ to $\hat{e}^i$.

The underlying idea behind establishing a relationship between $|\calL_q^i|$ and $e^i=\sum_{q': \tau_i\ge t_{q'}} e^i_{q'}$ is simple.
By Observation~\ref{obs:t_q},
quantity $e^i$ is bounded by the number of packets that are stored in \OPT's buffer, but not in \LQD's buffer at time $\tau_i$ (more precisely, $e^i \le \sum_{q'} \max\left\{s^i_{\OPT}(q') - s^i_{\LQD}(q'), 0\right\}$). Recall that the \LQD buffer is full in step $\tau_i$. Intuitively, for each packet that \OPT has in its buffer but \LQD has not, there must be a packet that \LQD has in its buffer but \OPT has not. Suppose for a moment that the latter packets are all located in queues in $\calL^i_q$.
Then there can be no more than $\sigma^i_q \cdot |\calL^i_q|$ of them and so, we would have $e^i \le \sigma^i_q \cdot |\calL^i_q|$. Unfortunately, things are more complicated because not all packets of the latter type may be located in live queues. We address this problem by introducing the earlier mentioned careful mapping of transmitted \LQD-extra packets to cancel out some of the packets counted in $e^i$. The following notation will be useful for brevity:
\begin{itemize}
\item{$\calQ^i$:} the set of queues $q$ with $\tau_i\ge t_q$ and $e^i_q > 0$.
  In words, $\calQ^i$ is the set of queues that have overflowed for the last time by step $\tau_i$
  and there are still some \OPT-extra packets to be transmitted from them in phase $i$ or later.
\end{itemize}

To describe our specific mapping, we apply the procedure specified in Algorithm~\ref{alg:mapping} on the solutions
of \LQD and \OPT on the fixed instance $I$.
Intuitively, this procedure maintains a counter $m'_q$ for the number of \LQD-extra packets assigned to a queue $q$,
and assigns \LQD-extra packets in an arbitrary order.
Namely, an \LQD-extra packet transmitted in phase $i$ is mapped to a queue $q'$ in $\calQ^i$ with $m'_q<e^i_q$, i.e.,
with an \OPT-extra packet transmitted from $q$ in phase $i$ or later, such that $t_{q'}$ is the smallest among such queues (breaking ties arbitrarily).
Our values $m_q$ are given as the final values of $m'_q$ after the procedure has been run.
Recall that we use the $m_q$ values to define the $\hat{e}_q$ and $\hat{e}^i_q$ values, which in turn define the scheme to split the \LQD profit.
However, as the mapping procedure is independent from the  $\hat{e}_q$ and $\hat{e}^i_q$ values,
this creates no issue in the analysis. %

\begin{algorithm2e}
\caption{Mapping Procedure}
\label{alg:mapping}
\SetKwComment{Comment}{// }{}
\ForEach{queue $q$}{
  Initialize $m'_q := 0$ \Comment{ counter for packets assigned to $q$}
}
  \ForEach{\LQD-extra packet $p$ transmitted in phase $i$}{
  	\If{there is a queue $q\in\calQ^i$ with $m'_q<e^i_q$}{
	  $\displaystyle{q': = \argmin_{q: q\in\calQ^i\text{ and } m'_q<e^i_q}\{t_q\}}$
      \Comment{breaking ties arbitrarily}
      $m'_{q'} := m'_{q'}+1$ \Comment{assign packet $p$ to queue $q'$}
    }
    \Comment{Otherwise, packet $p$ is not assigned}
  }
\ForEach{queue $q$}{
  $m_q := m'_q$ \Comment{ the final value of $m'_q$}
}
\end{algorithm2e}

We now show a lower bound on the $m_{q'}$ values for a phase $i$. The bound is specific
to a particular queue $q\in\calQ^i$ with $m_q<e_q$; in the following, live and dying queues are w.r.t.\ queue $q$.
For technical reasons, we prove a lower bound on the following quantity:
Let $\overline{m}^i_{q'}$ be the number of \LQD-extra packets transmitted in a phase $j\ge i$ that are mapped to $q'$.
The point is that the constraint $m'_{q'}<e^i_{q'}$ for assigning an \LQD-extra packet to $q'$ in Algorithm~\ref{alg:mapping}
implies that $e^i_{q'}\ge \overline{m}^i_{q'}$, even though it may happen that $e^i_{q'} < m_{q'}$.

For simplicity, let $z^i_q := \sum_{q'\in\calL^i_q}[s^i_{\OPT}(q') > 0]$
be the number of live queues $\calL^i_q$ that are non-empty in \OPT. 
In words, the first term of~\eqref{eqn:mappingBound} below, i.e., %
$\sum_{q'\in\calD^i_q}\max\left\{s^i_{\LQD}(q')-s^i_{\OPT}(q'),0\right\}$, equals the number of packets
that \LQD stores in excess of \OPT in dying queues $q'\in \calD^i_q$ with $s^i_{\LQD}(q') > s^i_{\OPT}(q')$ in step $\tau_i$,
while the second term, i.e., $|\calL^i_q| - z^i_q$, equals 
the number of live queues $\calL^i_q$ that are empty in the \OPT buffer in step $\tau_i$.

\begin{restatable}{lemma}{lemmaMappingBound}\label{lem:mapping_bound}
For any phase $i$ and queue $q\in\calQ^i$ with $m_q < e^i_q$ (i.e., with $\hat{e}^i_q > 0$) we have
\begin{equation}\label{eqn:mappingBound}
\sum_{q'\in\calQ^i} \overline{m}^i_{q'} \ge \sum_{q'\in\calD^i_q}\max\left\{s^i_{\LQD}(q')-s^i_{\OPT}(q'),0\right\} 
								+
                                                                \left(|\calL^i_q|
                                                                  -
                                                                  z^i_q\right)\,.
\end{equation}
\end{restatable}

\begin{proof}
Recall from Definition~\ref{def:liveDying} that
$j_q = \min\{j : \hat{e}^j_q = 0\}$ is the index $j$ of the earliest step $\tau_j$ in which all remaining \OPT-extra packets to be transmitted from $q$ (if any) are canceled out.
Note that $i < j_q$ by the assumption of the lemma and that $m_q < e^j_q$ and $q\in \calQ^j$ for any $j\in [i, j_q)$.

Consider a dying queue $q'\in\calD^i_q$ with $s^i_{\LQD}(q')-s^i_{\OPT}(q') > 0$.
It holds that $e_{q'} = 0$ by Observation~\ref{obs:t_q}.
By Definition~\ref{def:liveDying}, $q'$ will be empty for \LQD in step $\tau_{j_q}$.
Since $q'$ does not overflow after time $\tau_i$ (and hence \LQD will accept all packets which may arrive to $q'$ after time $\tau_i$), at least $s^i_{\LQD}(q')-s^i_{\OPT}(q')$ \LQD-extra packets are transmitted from $q'$ from time $\tau_i$ until time $\tau_{j_q} - 1$, i.e., in phases $j\in [i,j_q)$.
Using $m_q < e^j_q$ and $q\in \calQ^j$ for any $j\in [i, j_q)$,
all these \LQD-extra packets are allocated to queues $\overline{q}$
that satisfy $t_{\overline{q}} \le t_q \le \tau_i$ and $e^i_{\overline{q}} > 0$; the second property holds
as $t_{\overline{q}} \le t_q\le \tau_i$ and as $\overline{q}\in \calQ^j$ for a phase $i\le j < j_q$
in which the \LQD-extra packet assigned to it is transmitted, by Algorithm~\ref{alg:mapping}. 
Thus, such queues $\overline{q}$ are part of the set $\calQ^i$.

In addition, there are $|\calL^i_q| - z^i_q$ live queues in step $\tau_i$ that are empty in the \OPT buffer.
Hence, \LQD transmits $|\calL^i_q| - z^i_q$ \LQD-extra packets in step $\tau_i$ from such queues,
and as $m_q < e^i_q$, all these packets are assigned to queues $\overline{q}\in \calQ^i$ by Algorithm~\ref{alg:mapping}.
\end{proof}

Finally, we bound the number of \OPT-extra packets which are not canceled out by \LQD-extra packets.
Equivalently, for a queue $q$, we show a lower bound on $|\calL^i_q|$ in terms of $\hat{e}^i$.
The lemma below in particular implies that if $\hat{e}^i_q > 0$ then also $|\calL^i_q| \ge 1$, i.e., there is at least 
one live queue w.r.t.\ queue $q$.

\begin{restatable}{lemma}{lemmaUnmatchedExtraBound}\label{lem:unmatchedExtraBound}
For any phase $i$ and queue $q\in\calQ^i$ with $\hat{e}^i_q > 0$, we have that
$\displaystyle
\hat{e}^i \le (\sigma^i_q - 1)\cdot |\calL^i_q|\,.
$
\end{restatable}

\begin{proof}
First note that
\begin{equation*}%
\hat{e}^i = \sum_{q'\in \calQ^i} \hat{e}^i_{q'} = \sum_{q'\in \calQ^i} \max\{e^i_{q'} - m_{q'}, 0\}
\le \sum_{q'\in \calQ^i} \max\{e^i_{q'} - \overline{m}^i_{q'}, 0\}
= \sum_{q'\in \calQ^i} (e^i_{q'} - \overline{m}^i_{q'})\,,
\end{equation*}
where the inequality holds by $\overline{m}^i_{q'}\le m_{q'}$
and the last step follows from $e^i_{q'}\ge \overline{m}^i_{q'}$, by the definition of $\overline{m}^i_{q'}$ and Algorithm~\ref{alg:mapping}.
Using Lemma~\ref{lem:mapping_bound}, we obtain
\begin{equation}\label{eqn:unmatchedExtraBound 2}
\hat{e}^i \le \sum_{q'\in \calQ^i} (e^i_{q'}) - \sum_{q'\in\calD^i_q}\max\left\{s^i_{\LQD}(q')-s^i_{\OPT}(q'),0\right\} 
								- \left(|\calL^i_q| - z^i_q\right)
\end{equation}
By the definition of $\sigma^i_q$ in~\eqref{eqn:sigma_def} and since the \LQD buffer is full in step $\tau_i$, we have
\begin{equation}\label{eqn:unmatchedExtraBound-LQD_buffer}
M= \sigma^i_q \cdot |\calL^i_q| + \sum_{q'\in \calD^i_q} s^i_{\LQD}(q')\,.
\end{equation}
Regarding the \OPT buffer, we have 
\begin{align}
M\ge \sum_{q'} s^i_{\OPT}(q')
&\ge \sum_{q'\in \calQ^i} \max\left\{s^i_{\OPT}(q') - s^i_{\LQD}(q'), 0\right\} + \sum_{q'} \min\{s^i_{\LQD}(q'),s^i_{\OPT}(q')\}
\nonumber\\
&\ge \sum_{q'\in \calQ^i} \left(e^i_{q'}\right) + \sum_{q'} \min\{s^i_{\LQD}(q'),s^i_{\OPT}(q')\}
\nonumber\\
&\ge \sum_{q'\in \calQ^i} \left(e^i_{q'}\right) + \sum_{q'\in \calD^i_q} \min\{s^i_{\LQD}(q'),s^i_{\OPT}(q')\}
	 + z^i_q
\,, \label{eqn:unmatchedExtraBound 3}
\end{align}
where the inequality in the second line uses Observation~\ref{obs:t_q}.
Combining~\eqref{eqn:unmatchedExtraBound-LQD_buffer} and~\eqref{eqn:unmatchedExtraBound 3}, we obtain
\begin{equation*}%
\sum_{q'\in \calQ^i} \left(e^i_{q'}\right) + \sum_{q'\in \calD^i_q} \min\{s^i_{\LQD}(q'),s^i_{\OPT}(q')\} + z^i_q
\le \sigma^i_q \cdot |\calL^i_q| + \sum_{q'\in \calD^i_q} s^i_{\LQD}(q')
\end{equation*}
After rearranging, we get
\begin{equation}\label{eqn:unmatchedExtraBound 5}
\sum_{q'\in \calQ^i} \left(e^i_{q'}\right) - \sum_{q'\in\calD^i_q}\max\left\{s^i_{\LQD}(q')-s^i_{\OPT}(q'),0\right\} + z^i_q
\le \sigma^i_q \cdot |\calL^i_q|\,.
\end{equation}
Finally, plugging~\eqref{eqn:unmatchedExtraBound 5} into Equation~\eqref{eqn:unmatchedExtraBound 2},
we get $\hat{e}^i \le \sigma^i_q \cdot |\calL^i_q| - |\calL^i_q|$, as desired.
\end{proof}

\section{Putting It All Together}
\label{sec:puttingItAllTogether}
\pvchanged{
In this section, we complete the proof of $1.6918$-competitiveness for \LQD.
The first part is to obtain suitable lower bounds on the profit assigned to a queue in a phase.
Next, we sum these lower bounds over all phases and derive a lower bound for this sum.
Finally, we optimize the parameter $\alpha$ to maximize $\rho$ (and thus, minimize the competitive ratio upper bound)
subject to $\Phi_q \ge \rho\cdot \hat{e}_q$ for any queue $q$.

\paragraph{L-increase and S-increase.}
Before proving the lower bounds, it is convenient to split the quantity $\Delta^{i}\Phi_q$ into two quantities that are analyzed separately.
Apart from $\alpha$, we use another parameter $\beta\in (0, 1)$ such that $\alpha + \beta < 1$; namely we will set $\beta = 1-\sqrt{1-\alpha}-\alpha/2$ (we will require that $\alpha$ is not too close to $1$ so that $\alpha + \beta < 1$; namely, this setting of $\beta$ requires that $\alpha < 2(\sqrt{2}-1) \approx 0.828$).
Given such $\alpha$ and $\beta$, we adjust the definition of $\Delta^{i}\Phi_q$ in~\eqref{eqn:splitting profit def init} to
\begin{equation}\label{eqn:splitting profit def}
	\Delta^{i}\Phi_q = \underbrace{\frac{\hat{e}^i_q}{\hat{e}^i} \cdot (n^{i} + \alpha \cdot o^{i} - \Delta^i \Psi)
		+ \beta\cdot o^i_q}_\text{L-increase} \,\, + \,\, \underbrace{(1 - \alpha - \beta)\cdot o^i_q}_\text{S-increase}
\end{equation}

We call the first two terms in Equation~\eqref{eqn:splitting profit def}
(i.e., $(\hat{e}^i_q / \hat{e}^i) \cdot (n^i + \alpha \cdot o^i - \Delta^i \Psi) + \beta\cdot o^i_q$)
the \emph{L-increase} for $q$ as they will be mainly useful for ``long'' queues (with relatively high $\hat{e}_q$).
We call the last term, $(1 - \alpha - \beta)\cdot o^i_q$, the \emph{S-increase} for $q$
as it works well for ``short'' queues.

Analyzing the S-increases is relatively easy. Most of the remainder of this section is thus focused on showing lower bounds on the L-increase of a queue $q$ in phase $i$, using the lemmas developed in previous sections.
This will be divided into two cases, according to whether the value of $\sigma^i_q$ decreases or not (w.r.t.\ variable $i$). %
} %

\subsection{Lower Bounds on the L-Increase}
\label{sec:LB on L-increase}

In this section, for a queue $q$, we show lower bounds on the L-increase for a phase $i$ with $\tau_i\ge t_q$ and $\hat{e}^i_q>0$.

\bigskip

We start with a generic lower bound on $n^i + \alpha \cdot o^i - \Delta^i \Psi$.
\begin{claim}\label{clm:lower bound on first part}
For any phase $i$ and queue $\tau_i\ge t_q$ with $\hat{e}^i_q > 0$, we have that
    \begin{align*}
        n^i + \alpha \cdot o^i - \Delta^i \Psi
         & \ge \frac{\hat{e}^i}{\sigma^i_q-1} \cdot \Big(\tau_{i+1} - \tau_i - \alpha\cdot \min\Big\{\frac{\tau_{i+1}-\tau_i}{2}, \frac{\sigma^i_q}{\sigma^{i+1}_q}-1, \sigma^i_q-\sigma^{i+1}_q\Big\}\Big) \,.
    \end{align*}
\end{claim}
\begin{proof}
    We start by deriving several upper bounds on $u^i -d^i_q - v^i$. The first two bounds follow from Lemma~\ref{lem:dying_bound}:
    \begin{equation}\label{eqn:LB1stpart-eqn1}
        u^i -d^i_q - v^i \le u^i - \frac{d^i_q + v^i}{\sigma^{i+1}_q} \le \frac{\sigma^i_q-\sigma^{i+1}_q}{\sigma^{i+1}_q}\cdot|\calL^i_q|
    \end{equation}
    and
    \begin{equation}\label{eqn:LB1stpart-eqn2}
        u^i -d^i_q - v^i \le  \sigma_q^{i+1} \cdot u^i -d^i_q - v^i \le (\sigma_q^i-\sigma_q^{i+1}) \cdot |\calL^i_q| \,.
        \end{equation}
    More precisely, the first inequality in~\eqref{eqn:LB1stpart-eqn1}
    uses $\sigma_q^{i+1}\ge 1$ and the second one holds by Lemma~\ref{lem:dying_bound}; \eqref{eqn:LB1stpart-eqn2} is derived similarly.
    We therefore have
    \begin{align*}\label{eqn:udv-upper-one}
        u^i -d^i_q - v^i \le \min\Big\{\frac{\sigma^i_q}{\sigma^{i+1}_q}-1,\sigma^i_q-\sigma^{i+1}_q\Big\}\cdot |\calL^i_q| \,.
    \end{align*}
    Using Observation~\ref{obs:live_and_dying}, we obtain a lower bound on $n^i + \alpha \cdot o^i - \Delta^i \Psi$:
    \begin{equation}
        \begin{aligned}\label{eqn:profit first two lower bounds}
            n^i + \alpha \cdot o^i - \Delta^i \Psi & \ge |\calL^i_q|\cdot (\tau_{i+1}-\tau_i) + \alpha \cdot (d^i_q - u^i + v^i)                                                                      \\
                                                   & \ge |\calL^i_q|\cdot \Big(\tau_{i+1}-\tau_i - \alpha \cdot\min\Big\{\frac{\sigma^i_q}{\sigma^{i+1}_q}-1,\sigma^i_q-\sigma^{i+1}_q\Big\}\Big) \,.
        \end{aligned}
    \end{equation}
    Furthermore, Lemma~\ref{lem:new active UB simple} gives
    \[
        u^i -v^i \le u^i \le \frac12\cdot (n^i + o^i) \,.
    \]
    Using this together with Observation~\ref{obs:live_and_dying} we get
    \begin{align}\label{eqn:profit third lower bound}
        n^i + \alpha \cdot o^i - \Delta^i \Psi \ge n^i \cdot \left(1-\frac{\alpha}{2}\right) + o^i\cdot \frac{\alpha}{2} \ge n^i \cdot \left(1-\frac{\alpha}{2}\right) \ge |\calL^i_q|\cdot \left(\tau_{i+1}-\tau_i - \alpha\cdot \frac{\tau_{i+1}-\tau_i}{2}\right) \,.
    \end{align}
    Combining the upper bounds from \eqref{eqn:profit first two lower bounds} and \eqref{eqn:profit third lower bound} and using $|\calL^i_q| \ge \hat{e}^i/(\sigma^i_q-1)$ from Lemma~\ref{lem:unmatchedExtraBound} implies the claim.
\end{proof}

The following lemma is mainly useful for the case $\sigma^{i+1}_q \ge \sigma^i_q$ and follows directly from Claim~\ref{clm:lower bound on first part}
(using only the first and second terms in the minimum expression and then multiplying the inequality by $\hat{e}^i_q/\hat{e}^i$). %

\begin{lemma}\label{lem:LB on L-increase}
  Consider any queue $q$ with $\hat{e}_q>0$ and any phase $i$ with $\tau_i\ge t_q$ and $\hat{e}^i_q>0$.
  Then the L-increase in phase $i$ for queue $q$ satisfies
  \begin{equation*}\label{eqn:LB on L-increase}
    \frac{\hat{e}^i_q}{\hat{e}^i} \cdot (n^i + \alpha \cdot o^i - \Delta^i \Psi)
    \ge \hat{e}^i_q \cdot \frac{\alpha \cdot \max\left\{\sigma^{i+1}_q - \sigma^i_q,1-\sigma^i_q/\sigma^{i+1}_q\right\} + (\tau_{i+1} - \tau_i)}{\sigma^i_q - 1}\,.
  \end{equation*}
\end{lemma}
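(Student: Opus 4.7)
The plan is to combine Observation~\ref{obs:live_and_dying}, Lemma~\ref{lem:dying_bound}, and Lemma~\ref{lem:unmatchedExtraBound} to produce a lower bound on $n^i + \alpha \cdot o^i - \Delta^i \Psi$, divide by $\hat{e}^i$, and finally observe that the last term $\beta\cdot o^i_q$ is non-negative and can simply be dropped.

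First, I would invoke Observation~\ref{obs:live_and_dying} to get $n^i + \alpha \cdot o^i \ge |\calL^i_q|\cdot (\tau_{i+1} - \tau_i) + \alpha\cdot d^i_q$. Then I would substitute the bound from Lemma~\ref{lem:dying_bound}, namely $d^i_q \ge (\sigma^{i+1}_q - \sigma^i_q)\cdot |\calL^i_q| + \sigma^{i+1}_q\cdot u^i - v^i$, and subtract $\Delta^i \Psi = \alpha\cdot(u^i - v^i)$. The pleasant cancellation is that the $v^i$ contributions $-\alpha v^i$ (from $d^i_q$) and $+\alpha v^i$ (from $-\Delta^i\Psi$) cancel exactly, while the coefficient of $u^i$ becomes $\alpha(\sigma^{i+1}_q - 1) \ge 0$, which can be dropped since $|\calL^{i+1}_q|\ge 1$ forces $\sigma^{i+1}_q\ge 1$. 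This yields
\[
n^i + \alpha \cdot o^i - \Delta^i \Psi \;\ge\; |\calL^i_q|\cdot \left[(\tau_{i+1} - \tau_i) + \alpha\cdot (\sigma^{i+1}_q - \sigma^i_q)\right].
\]

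Because the hypothesis $\sigma^{i+1}_q \ge \sigma^i_q$ makes the bracket non-negative, I can replace $|\calL^i_q|$ by the smaller quantity $\hat{e}^i/(\sigma^i_q - 1)$ coming from Lemma~\ref{lem:unmatchedExtraBound}. The denominator is positive, since $\hat{e}^i_q>0$ together with Lemma~\ref{lem:unmatchedExtraBound} forces $|\calL^i_q|\ge 1$, and then Observation~\ref{obs:sigma_lower_bound} gives $\sigma^i_q\ge 2$. Multiplying through by $\hat{e}^i_q/\hat{e}^i$ and adding the non-negative $\beta\cdot o^i_q$ then produces exactly~\eqref{eqn:LB on L-increase}.

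The only step that requires a bit of care is the cancellation that kills the $v^i$ terms and turns the $u^i$ coefficient into a non-negative quantity; this is precisely why the potential $\Psi$ was defined with coefficient $\alpha$ in Section~\ref{s:splitting}. Everything else is a direct chain of substitutions, and the monotonicity hypothesis $\sigma^{i+1}_q \ge \sigma^i_q$ is used only at the very end to ensure that the inequality is preserved when the lower bound on $|\calL^i_q|$ is plugged in.
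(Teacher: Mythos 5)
Your proposal is correct and follows essentially the same route as the paper's proof: both combine Observation~\ref{obs:live_and_dying}, Lemma~\ref{lem:dying_bound}, and Lemma~\ref{lem:unmatchedExtraBound}, with the only cosmetic difference being that the paper first weakens $\sigma^{i+1}_q\cdot u^i$ to $u^i$ (via $\sigma^{i+1}_q\ge 1$) so that the $u^i$ and $v^i$ terms from $-\Delta^i\Psi$ cancel exactly, whereas you keep the full coefficient and note that it becomes the non-negative $\alpha(\sigma^{i+1}_q-1)$ and can be dropped. You also correctly identify that the hypothesis $\sigma^{i+1}_q\ge\sigma^i_q$ is needed precisely to ensure the bracket multiplying $|\calL^i_q|$ is non-negative before substituting the lower bound on $|\calL^i_q|$.
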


Next, we deal with the (more involved) case when $\sigma^{i+1}_q < \sigma^i_q$.
Note that when $\hat{e}^{i+1}_q>0$, we have that $|\calL^{i+1}_q|\ge 1$ by Lemma~\ref{lem:unmatchedExtraBound}
and thus $\sigma^{i+1}_q\ge 2$ by Observation~\ref{obs:sigma_lower_bound},
while if $\hat{e}^{i+1}_q = 0$, the last term in~\eqref{eqn:L-charge per phase LB - decreasing sigma} below is defined to be 0, even if $\sigma^{i+1}_q = 1$.

\begin{lemma}\label{lem:L-charge LB - decreasing sigma}
Consider any queue $q$ with $\hat{e}_q>0$ and any phase $i$ with $\tau_i\ge t_q$ and $\hat{e}^i_q>0$, and suppose that $\sigma^{i+1}_q < \sigma^i_q$.
Then, for $\beta \ge 1-\sqrt{1-\alpha}-\alpha/2$,
it holds that
\begin{equation}\label{eqn:L-charge per phase LB - decreasing sigma}
\frac{\hat{e}^i_q}{\hat{e}^i} \cdot (n^i + \alpha \cdot o^i - \Delta^i \Psi) + \beta\cdot o^i_q
\ge\sum_{t = \tau_i}^{\tau_{i+1} - 1} \left(\frac{\hat{e}^t_q}{\sigma^i_q - 1}\right) - \frac{g^i_q}{2(\sigma^i_q - 1)}
	- \hat{e}^{i+1}_q\cdot \left(\frac{1}{\sigma^{i+1}_q - 1} - \frac{1}{\sigma^i_q - 1}\right)\,,
\end{equation}
where $g^i_q$ is the number of steps $t\in [\tau_i, \tau_{i+1})$ with $\hat{e}^t_q > 0$ and $s^t_\LQD(q) = 0$.
\end{lemma}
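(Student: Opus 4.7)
The plan is to adapt the argument of Lemma~\ref{lem:LB on L-increase} to the case $\sigma^{i+1}_q < \sigma^i_q$, where the $\alpha(\sigma^{i+1}_q - \sigma^i_q)|\calL^i_q|$ term now creates a deficit. Since the subcase $|\calL^{i+1}_q|=0$ is already handled by Lemma~\ref{lem:LB on L-increase - no live queue in the next phase}, I may assume $|\calL^{i+1}_q| \ge 1$, and then Observation~\ref{obs:sigma_lower_bound} gives $\sigma^{i+1}_q \ge 2$ (when $\hat{e}^{i+1}_q > 0$; the subcase $\hat{e}^{i+1}_q = 0$ is easier as the last term of the target then vanishes). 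Combining $n^i + \alpha o^i \ge |\calL^i_q|(\tau_{i+1}-\tau_i) + \alpha d^i_q$ from Observation~\ref{obs:live_and_dying} with the rearranged form~\eqref{eqn:potential incr bound w/ sigma} of Lemma~\ref{lem:dying_bound}, and using $(\sigma^{i+1}_q - 1)/\sigma^{i+1}_q \ge 1/2$, I first obtain
\begin{equation*}
n^i + \alpha o^i - \Delta^i\Psi \;\ge\; |\calL^i_q|(\tau_{i+1}-\tau_i) \;-\; \alpha|\calL^i_q|\,\frac{\sigma^i_q - \sigma^{i+1}_q}{\sigma^{i+1}_q} \;+\; \frac{\alpha d^i_q}{2}\,.
\end{equation*}
Multiplying by $\hat{e}^i_q/\hat{e}^i$ and substituting $|\calL^i_q|\ge \hat{e}^i/(\sigma^i_q-1)$ from Lemma~\ref{lem:unmatchedExtraBound} produces a lower bound on the first part of the L-increase whose leading term is $\hat{e}^i_q(\tau_{i+1}-\tau_i)/(\sigma^i_q-1)$, minus an explicit deficit proportional to $\sigma^i_q-\sigma^{i+1}_q$.

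Second, I upper-bound the RHS of the lemma using $\sum_t \hat{e}^t_q \le (\tau_{i+1}-\tau_i)\hat{e}^i_q$ and $\hat{e}^{i+1}_q \ge \hat{e}^i_q - g^i_q$. This reduces the goal to an algebraic estimate pitting the deficit $\alpha\hat{e}^i_q(\sigma^i_q - \sigma^{i+1}_q)/((\sigma^i_q-1)\sigma^{i+1}_q)$ against the target's compensations $\hat{e}^{i+1}_q(\sigma^i_q - \sigma^{i+1}_q)/((\sigma^i_q-1)(\sigma^{i+1}_q-1))$ and $g^i_q/(2(\sigma^i_q-1))$, plus the LHS slacks $\alpha\hat{e}^i_q d^i_q/(2\hat{e}^i)$ and $\beta o^i_q$. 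A useful structural observation is that $q$ itself belongs to $\calD^i_q$ whenever $s^i_{\LQD}(q)\ge 1$, because both liveness conditions of Definition~\ref{def:liveDying} fail at $q$ (we have $t_q \le \tau_i$ and $e_q\ge \hat{e}_q>0$); consequently $o^i_q \le d^i_q$ always, so the $d^i_q$-slack dominates the $o^i_q$-slack whenever $\hat{e}^i_q/\hat{e}^i$ is not tiny.

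The hardest regime is when $g^i_q$ is comparable to $\hat{e}^i_q$, the gap $\sigma^i_q-\sigma^{i+1}_q$ is large, and $\hat{e}^i_q/\hat{e}^i$ is small, because then the $d^i_q$-slack is weak and the capacity constraint $o^i_q + g^i_q \le \tau_{i+1}-\tau_i$ caps the $\beta o^i_q$ contribution. To cover this regime I will invoke the complementary bound $u^i \le (n^i+o^i)/2$ from Lemma~\ref{lem:new active UB simple}, valid whenever $\tau_{i+1}-\tau_i \le \sigma^i_q - 2$; it yields the independent estimate $n^i + \alpha o^i - \Delta^i\Psi \ge (1-\alpha/2)(n^i + o^i)$, whose magnitude scales with the full phase profit rather than with the $\sigma$-drop. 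The final step is to take whichever of the two lower bounds is stronger in each subcase and to close the gap using the $\beta o^i_q$ term. The specific value $\beta = \alpha^2/(8(1-\alpha))$ is exactly the constant that balances the two estimates in the worst subcase, while the mild restriction $\alpha \le 8/9$ keeps the elementary inequalities used in the balancing valid. A short case split on whether $\tau_{i+1}-\tau_i$ exceeds $\sigma^i_q - 2$, together with subcases distinguishing $\sigma^{i+1}_q = 2$ from $\sigma^{i+1}_q \ge 3$ (the deficit is most pronounced for the former), should then complete the proof.
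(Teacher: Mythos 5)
The proposal captures the correct high-level architecture of the paper's proof — both use the bound on $u^i$ from Lemma~\ref{lem:dying_bound} (via inequality~\eqref{eqn:potential incr bound w/ sigma}), both fall back on Lemma~\ref{lem:new active UB simple} when the phase is short, both recognize that $\beta=\alpha^2/(8(1-\alpha))$ arises from balancing the two regimes — but it leaves the actual hard work undone and contains several steps that do not go through as stated.

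First, the claim that multiplying $|\calL^i_q|(\tau_{i+1}-\tau_i) - \alpha|\calL^i_q|\tfrac{\sigma^i_q-\sigma^{i+1}_q}{\sigma^{i+1}_q} + \tfrac{\alpha d^i_q}{2}$ by $\hat{e}^i_q/\hat{e}^i$ and ``substituting $|\calL^i_q|\ge\hat{e}^i/(\sigma^i_q-1)$'' produces a lower bound is incorrect on its face: the deficit term has a \emph{negative} coefficient on $|\calL^i_q|$, so plugging in a lower bound on $|\calL^i_q|$ there makes the expression \emph{smaller}, not larger. The paper avoids this by first combining the two $|\calL^i_q|$-terms into $|\calL^i_q|\bigl(\tau_{i+1}-\tau_i - \alpha(\sigma^i_q/\sigma^{i+1}_q) + \alpha\bigr)$ and only then applying Lemma~\ref{lem:unmatchedExtraBound}, with the case conditions ensuring the bracket is nonnegative; the proposal never addresses this. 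Second, the stated fallback estimate $n^i+\alpha o^i-\Delta^i\Psi \ge (1-\alpha/2)(n^i+o^i)$ is simply too strong: Lemma~\ref{lem:new active UB simple} gives $n^i+\alpha o^i - \tfrac{\alpha}{2}(n^i+o^i) = (1-\tfrac{\alpha}{2})n^i + \tfrac{\alpha}{2}o^i$, whose $o^i$-coefficient is $\alpha/2$, not $1-\alpha/2$; the paper consequently only uses $(1-\alpha/2)n^i$. Third, replacing $\sum_{t}\hat{e}^t_q$ by the crude upper bound $(\tau_{i+1}-\tau_i)\hat{e}^i_q$ discards the negative corrections $-g^2/(2(\sigma^i_q-1))$ and $-g h/(\sigma^i_q-1)$ that the paper's proof relies on; since you are upper-bounding the right-hand side of the target inequality, dropping a negative term makes the target \emph{harder}, not easier, so this loses exactly the slack needed to close the bound. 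Finally, the crucial balancing — the three-way case split on inequalities~\eqref{eqn:y_and_z_large}–\eqref{eqn:phase_length_large}, the contradiction showing that~\eqref{eqn:phase_length_large} forces one of the other two, and the discriminant computation that produces the constraints $\beta\ge\alpha^2/(8(1-\alpha))$ and $\alpha\le 8/9$ — is only asserted (``should then complete the proof''), not carried out. As written the proposal is a plausible roadmap that identifies the right ingredients but would not survive being turned into a proof without significant repair.
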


Before we prove this lemma, we start by establishing a lower bound on $\beta \cdot o^i_q$.
\begin{claim}\label{clm:lower bound on second part}
	For $\beta \ge 1-\sqrt{1-\alpha}-\alpha/2$,
	\begin{equation}\label{eqn:lower bound on second part}
			\beta\cdot o^i_q \ge  \frac{g^i_q}{\sigma^i_q-1}\cdot\Big(\alpha\cdot \min\Big\{\frac{\tau_{i+1}-\tau_i}{2}, \frac{\sigma^i_q}{\sigma^{i+1}_q}-1, \frac{\sigma^i_q-1}{2} \Big\} - \frac{g^i_q}{2} - h^i_q\Big) \,,
	\end{equation}
	where $h^i_q$ is the number of steps $t$ in the $i$-th phase such that $\hat{e}^t_q = 0$ (i.e., all 
	\OPT-extra packets are already transmitted or canceled out).
\end{claim}
\begin{proof}

If the right-hand side is smaller than 0, there is nothing to show. Therefore, in the following we assume that the right-hand side is non-negative.

First note that from the fixed queue $q$, during phase $i$, first both \LQD and \OPT transmit $o^i_q$ packets (possibly $o^i_q = 0$)
and then $q$ becomes empty for \LQD, so the adversary transmits \OPT-extra packets,
until $q$ becomes inactive or until the end of the phase.
Observe that $\tau_{i+1} - \tau_i = o^i_q + g^i_q + h^i_q$ and that $\hat{e}^{i+1}_q = \hat{e}^i_q - g^i_q$. 
The right-hand side of~\eqref{eqn:lower bound on second part} is upper bounded by
\begin{align}
\frac{g^i_q}{\sigma^i_q-1}\cdot\Big(\alpha\cdot \min\Big\{\frac{\tau_{i+1}-\tau_i}{2}, \frac{\sigma^i_q-1}{2} \Big\} - \frac{g^i_q}{2} - h^i_q\Big)
&\le
\frac{g^i_q}{\tau_{i+1}-\tau_i}\cdot\Big(\alpha\cdot \frac{\tau_{i+1}-\tau_i}{2} - \frac{g^i_q}{2} - h^i_q\Big)
\label{eqn:clm:LB on 2nd part - step 2}\\
&=
\frac{g^i_q}{o^i_q+g^i_q+h^i_q}\cdot\Big(\alpha\cdot \frac{o^i_q+g^i_q+h^i_q}{2} - \frac{g^i_q}{2} - h^i_q\Big)
\label{eqn:clm:LB on 2nd part - step 3}\\
&=
\frac{g^i_q}{o^i_q+g^i_q+h^i_q}\cdot \frac{\alpha\cdot o^i_q - (1-\alpha) \cdot g^i_q - (2-\alpha) \cdot h^i_q}{2}
\nonumber\\
&\le
\frac{g^i_q}{o^i_q+g^i_q}\cdot \frac{\alpha\cdot o^i_q - (1-\alpha) \cdot g^i_q}{2}
\label{eqn:clm:LB on 2nd part - step 5}\\
&\le
o^i_q\cdot \Big(1-\sqrt{1-\alpha}-\frac{\alpha}{2}\Big)
\le o^i_q\cdot \beta \,,
\label{eqn:clm:LB on 2nd part - step 6}
\end{align}
where \eqref{eqn:clm:LB on 2nd part - step 2} holds by a case analysis, namely, if $\tau_{i+1}-\tau_i \le \sigma^i_q-1$,
then we use $g^i_q / (\sigma^i_q-1)\le g^i_q / (\tau_{i+1}-\tau_i)$, while in the other case, $\min\{\tau_{i+1}-\tau_i, \sigma^i_q-1 \} = \sigma^i_q-1$ and we apply 
$(- g^i_q/2 - h^i_q) / (\sigma^i_q-1) \le (- \frac{g^i_q}{2} - h^i_q) / (\tau_{i+1}-\tau_i)$;
inequality~\eqref{eqn:clm:LB on 2nd part - step 3} follows from $\tau_{i+1} - \tau_i = o^i_q + g^i_q + h^i_q$; and inequality~\eqref{eqn:clm:LB on 2nd part - step 6} follows because expression~\eqref{eqn:clm:LB on 2nd part - step 5} is maximized for $g^i_q = o^i_q \cdot (1/\sqrt{1-\alpha}-1)$ (in more detail, the inequality clearly holds for $o^i_q = 0$ and otherwise, letting $x = g^i_q / o^i_q\ge 0$, expression~\eqref{eqn:clm:LB on 2nd part - step 5} divided by $o^i_q$ equals $\frac{x}{2+2x}\cdot (\alpha - (1 - \alpha)\cdot x)$, which, for a fixed $0 < \alpha < 1$, is maximized for $x = \Big(1-\sqrt{1-\alpha}-\frac{\alpha}{2}\Big)$ by a routine calculation).
\end{proof}

With this, we are ready to prove Lemma~\ref{lem:L-charge LB - decreasing sigma} by combining Claims~\ref{clm:lower bound on first part} and \ref{clm:lower bound on second part}.

\begin{proof}[Proof of Lemma~\ref{lem:L-charge LB - decreasing sigma}]
\begin{align}
	\frac{\hat{e}^i_q}{\hat{e}^i} &\cdot (n^i + \alpha \cdot o^i - \Delta^i \Psi) + \beta\cdot o^i_q
	\nonumber\\
	&\ge 
	\frac{\hat{e}^i_q}{\sigma^i_q-1} \cdot \Big(\tau_{i+1}-\tau_i - \alpha\cdot \min\Big\{\frac{\tau_{i+1}-\tau_i}{2}, \frac{\sigma}{\sigma'}-1,\frac{\sigma^i_q-1}{2}\Big\}\Big) + \beta\cdot o^i_q
	\label{eqn:lem:L-charge LB - decreasing sigma - step1}\\
	&\ge
	\frac{\hat{e}^i_q}{\sigma^i_q-1}\cdot (\tau_{i+1}-\tau_i)- \alpha\cdot \frac{\hat{e}^i_q-g^i_q}{\sigma^i_q-1} \cdot \min\Big\{\frac{\tau_{i+1}-\tau_i}{2}, \frac{\sigma^i_q}{\sigma^{i+1}_q}-1,\frac{\sigma^i_q-1}{2}\Big\} - \frac{(g^i_q)^2}{2\cdot(\sigma^i_q-1)} - \frac{g^i_q\cdot h^i_q}{\sigma^i_q-1}
	\label{eqn:lem:L-charge LB - decreasing sigma - step2}\\
&\ge
\sum_{t = \tau}^{\tau' - 1} \frac{\hat{e}^t_q}{\sigma^i_q - 1} - \alpha\cdot \frac{\hat{e}^i_q-g^i_q}{\sigma^i_q-1} \cdot \min\Big\{\frac{\tau_{i+1}-\tau_i}{2}, \frac{\sigma^i_q}{\sigma^{i+1}_q}-1,\frac{\sigma^i_q-1}{2}\Big\} - \frac{g^i_q}{2\cdot(\sigma^i_q-1)}
\label{eqn:lem:L-charge LB - decreasing sigma - step3}\\
	&=
\sum_{t = \tau}^{\tau' - 1} \frac{\hat{e}^t_q}{\sigma^i_q - 1} - \alpha\cdot \frac{\hat{e}^{i+1}_q}{\sigma^i_q-1} \cdot \min\Big\{\frac{\tau_{i+1}-\tau_i}{2}, \frac{\sigma^i_q}{\sigma^{i+1}_q}-1,\frac{\sigma^i_q-1}{2}\Big\} - \frac{g^i_q}{2\cdot(\sigma^i_q-1)}
\label{eqn:lem:L-charge LB - decreasing sigma - step4}\\
	&\ge
\sum_{t = \tau}^{\tau' - 1} \frac{\hat{e}^t_q}{\sigma^i_q - 1} - \alpha\cdot \frac{\hat{e}^{i+1}_q}{\sigma^i_q-1} \cdot \Big(\frac{\sigma^i_q}{\sigma^{i+1}_q}-1\Big) - \frac{g^i_q}{2\cdot(\sigma^i_q-1)}
\nonumber\\
	&\ge
\sum_{t = \tau}^{\tau' - 1} \frac{\hat{e}^t_q}{\sigma^i_q - 1} - \alpha\cdot \frac{\hat{e}^{i+1}_q}{\sigma^i_q-1} \cdot \Big(\frac{\sigma^i_q-1}{\sigma^{i+1}_q-1}-1\Big) - \frac{g^i_q}{2\cdot(\sigma^i_q-1)}
\label{eqn:lem:L-charge LB - decreasing sigma - step6}\\
	&=
\sum_{t = \tau}^{\tau' - 1} \frac{\hat{e}^t_q}{\sigma^i_q - 1} - \alpha\cdot \hat{e}^{i+1}_q \cdot \Big(\frac{1}{\sigma^{i+1}_q-1}-\frac{1}{\sigma^i_q-1}\Big) - \frac{g^i_q}{2\cdot(\sigma^i_q-1)}\,, \nonumber
\end{align}
where inequality~\eqref{eqn:lem:L-charge LB - decreasing sigma - step1} uses Claim~\ref{clm:lower bound on first part}, the step in~\eqref{eqn:lem:L-charge LB - decreasing sigma - step2} uses Claim~\ref{clm:lower bound on second part}, inequality~\eqref{eqn:lem:L-charge LB - decreasing sigma - step3} uses
\begin{equation*}
	\sum_{t = \tau_i}^{\tau_{i+1} - 1} \frac{\hat{e}^t_q}{\sigma^i_q - 1}
	= \frac{\hat{e}^i_q\cdot (\tau_{i+1} - \tau_i)}{\sigma^i_q - 1} - \frac{g^i_q\cdot (g^i_q - 1)}{2(\sigma^i_q - 1)}
	- \frac{\hat{e}^i_q\cdot h^i_q}{\sigma^i_q - 1}\,,
\end{equation*}
the fourth step (inequality~\eqref{eqn:lem:L-charge LB - decreasing sigma - step4}) uses $\hat{e}^{i+1}_q = \hat{e}^i_q - g^i_q$, and inequality~\eqref{eqn:lem:L-charge LB - decreasing sigma - step6} follows because $\sigma^{i+1}_q < \sigma^i_q$.
The lemma follows since $\alpha<1$.
\end{proof}

\subsection{Total \LQD Profit Assigned to a Queue}
\label{sec:summing L-increases}

Fix a queue $q$ with $\hat{e}_q > 0$, i.e., with transmitted \OPT-extra packets that are not canceled out
by transmitted \LQD-extra packets.
We now show a lower bound on $\sum_i \Delta^i \Phi_q$. Recall that $\Delta^i \Phi_q > 0$ only for phases $i$
with $\tau_i\ge t_q$ and $\hat{e}^i_q > 0$.

First, we bound the sum of S-increases. Note that as $q$ overflows at $t_q$, \LQD stores for this queue at least
$s^{t_q}_{\max} - 1$ packets in step $t_q$, and since it 
does not overflow after $t_q$, \LQD transmits at least $s^{t_q}_{\max} - 1\ge \left\lceil\sigma^{t_q}_q\right\rceil - 1$
packets from $q$ at or after $t_q$,
where the inequality is from Observation~\ref{obs:sigma_upper_bound} (recall that $t_q = \tau_i$ for some phase $i$).
Thus, the sum of S-increases is at least $(1 - \alpha - \beta)\cdot \left(\left\lceil\sigma^{t_q}_q\right\rceil - 1\right)$.
The next lemma shows a bound on the total L-increase assigned to queue $q$.

\begin{lemma}\label{lem:L-increases sum}
Assuming $\alpha \le 2/3$ and using $b_0 = \left\lceil\sigma^{t_q}_q\right\rceil - 1$ for simplicity,
the sum of L-increases assigned to a queue $q$ with $\hat{e}_q > 0$ over all phases is at least
\begin{equation*}
 \alpha\cdot\hat{e}_q\cdot \left(1 + \frac{b_0}{\hat{e}_q}\right)\cdot \ln\left(1 + \frac{\hat{e}_q}{b_0}\right)+ \alpha\cdot\hat{e}_q\cdot \ln\left(\frac1\alpha\right)\,.
\end{equation*}
\end{lemma}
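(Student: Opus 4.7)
The plan is to sum the per-phase lower bounds from Lemmas~\ref{lem:LB on L-increase - no live queue in the next phase}--\ref{lem:L-charge LB - decreasing sigma} over all phases $i$ with $\tau_i\ge t_q$ and $\hat{e}^i_q>0$, and then massage the resulting sum into the desired logarithmic expression. Fix $q$ and let $i_0,\ldots,i_{\text{end}}$ be the relevant phase indices, with $\sigma^{i_0}_q \le b_0 + 1$ and $\hat{e}^{i_0}_q = \hat{e}_q$.

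First, I would classify each phase $i$ into three categories according to which lemma applies: (a) $\calL^{i+1}_q=\emptyset$ (Lemma~\ref{lem:LB on L-increase - no live queue in the next phase}), (b) $\calL^{i+1}_q\neq\emptyset$ and $\sigma^{i+1}_q \ge \sigma^i_q$ (Lemma~\ref{lem:LB on L-increase}), (c) $\sigma^{i+1}_q < \sigma^i_q$ (Lemma~\ref{lem:L-charge LB - decreasing sigma}). The bound for case (c) is deliberately stated in terms of $\sum_{t=\tau_i}^{\tau_{i+1}-1} \hat{e}^t_q/(\sigma^i_q-1)$ rather than $\hat{e}^i_q\ell_i/(\sigma^i_q-1)$, which is crucial for tracking the per-step decrease of $\hat{e}^t_q$ once $q$ becomes empty in \LQD; I would use this form uniformly.

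The key step is to recognize the telescoping structure. The terms $\alpha \hat{e}^i_q(\sigma^{i+1}_q - \sigma^i_q)/(\sigma^i_q - 1)$ from the increasing case and the correction $-\hat{e}^{i+1}_q\bigl(1/(\sigma^{i+1}_q - 1) - 1/(\sigma^i_q-1)\bigr)$ from the decreasing case both approximate a discrete analogue of $\alpha \hat{e}\,d\ln(\sigma-1)$ when $\hat{e}^i_q$ is (piecewise) constant. When $\hat{e}^i_q$ varies, one applies Abel summation to regroup the contributions: what ultimately matters is the product $\hat{e}^i_q$ times the logarithmic change in $\sigma^i_q-1$. Combined with the constraint $\hat{e}^i\le (\sigma^i_q-1)|\calL^i_q|$ from Lemma~\ref{lem:unmatchedExtraBound}, which forces $\sigma^i_q-1$ to shrink only when $\hat{e}^i_q$ shrinks, this yields a lower bound of the form $\alpha \cdot \sum_{k} (\hat{e}_q+b_0)/(b_0+k)$ indexed over the decreases of $\hat{e}^i_q$. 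Bounding this harmonic sum from below by the integral $(\hat{e}_q+b_0)\int_{b_0}^{b_0+\hat{e}_q} dy/y = \hat{e}_q(1+b_0/\hat{e}_q)\ln(1+\hat{e}_q/b_0)$ recovers the first term.

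I expect the main obstacle to be making this accounting rigorous while $\sigma^i_q$ and $\hat{e}^i_q$ evolve simultaneously and possibly non-monotonically. Concretely, I would show that the adversary's worst-case trajectory is the monotone one in which $\sigma^i_q-1$ steps down in lockstep with $\hat{e}^i_q$ (so $\sigma^i_q-1 \approx b_0 + \hat{e}^i_q$ throughout); this reduction uses the convexity of $1/(\sigma-1)$ and the fact that any ``detour'' to larger $\sigma^i_q$ values only increases the running sum via the telescoping identities above. The additional term $\alpha\hat{e}_q\ln(1/\alpha)$ would then arise from the phase-length contributions in the regime where $\sigma^i_q$ has already dropped close to the minimum value permitted by Observation~\ref{obs:sigma_lower_bound} ($\sigma^i_q \ge 2$), during which $\sum_t \hat{e}^t_q/(\sigma^i_q-1)$ persists and, after the substitution into the integral bound, yields the extra $\ln(1/\alpha)$ factor once we account for the $\alpha$-fraction of \LQD profit that the L-increase retains relative to the S-increase.
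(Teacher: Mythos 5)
Your proposal correctly identifies the skeleton of the argument (split into the three per-phase cases, observe that the penalty term from the decreasing-$\sigma$ case partially cancels against the first summand of the next phase, reduce to a harmonic-type sum, bound it by a logarithm), and your final arithmetic form matches the paper's. However, the mechanism you sketch for the two hard steps is wrong, and not in a way that looks repairable.

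First, you claim that the worst-case (tight) trajectory has $\sigma^i_q - 1$ stepping \emph{down} in lockstep with $\hat{e}^i_q$, so that $\sigma^i_q - 1 \approx b_0 + \hat{e}^i_q$ throughout. That is backwards. The paper replaces $\sigma^i_q - 1$ by an auxiliary \emph{non-decreasing} sequence $a_i = \max(\sigma^i_q - 1, a_{i-1})$ (re-indexed by step as $b_j$), and the tight configuration is one in which this sequence \emph{grows}: the inner-sum claim minimizes $\alpha\ln(b_m/b_0) + (m - \tfrac32)/b_m + 1/b_0$ over the terminal value $b_m$, and the minimum sits at $b_m = (m - \tfrac32)/\alpha$, i.e.\ at a value roughly $1/\alpha$ times larger than $m$. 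That optimization is precisely what produces the $\alpha\ln(1/\alpha)$ term. Your explanation of $\ln(1/\alpha)$ --- ``phases where $\sigma^i_q$ has already dropped close to $2$'' plus ``the $\alpha$-fraction that the L-increase retains relative to the S-increase'' --- has no basis: the S-increase plays no role in this lemma, and a $\sigma$ near its minimum gives a \emph{larger} per-step contribution $\hat{e}^t_q/(\sigma-1)$, not a bottleneck.

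Second, you assert that Lemma~\ref{lem:unmatchedExtraBound} ``forces $\sigma^i_q - 1$ to shrink only when $\hat{e}^i_q$ shrinks.'' That lemma only gives $\hat{e}^i \le (\sigma^i_q-1)|\calL^i_q|$, and both $|\calL^i_q|$ and the totals over other queues can move independently of $\hat{e}^i_q$; nothing forces joint monotonicity. Relatedly, your ``any detour to larger $\sigma^i_q$ only increases the running sum'' reduction is exactly the opposite of what the paper has to fight against (the adversary benefits from large $\sigma$), so that reduction would need a very different argument --- and the paper does not do it at all, instead using the $a_i/b_j$ sequence, the explicit Claim with the $x-1 \ge \ln x$ inequality, and Stirling's approximation (Fact~\ref{fact:log-factorial}) for $\ln((b_0 + \hat{e}_q)!/b_0!)$ in place of your integral. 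As written, your outline would not assemble into a correct proof.
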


\begin{proof}
Consider phase $i$ with $\tau_i\ge t_q$ and $\hat{e}^i_q>0$.
As in the previous section, we have two cases.
If $\sigma^{i+1}_q \ge \sigma^i_q$, then using Lemma~\ref{lem:LB on L-increase},
in phase $i$ queue $q$ receives L-increase of at least
\begin{equation}\label{eqn:Lch incr sigma}
	\hat{e}^i_q \cdot \frac{\alpha \cdot \left(\sigma^{i+1}_q - \sigma^i_q\right) + (\tau_{i+1} - \tau_i)}{\sigma^i_q - 1}
	\ge \sum_{t = \tau_i}^{\tau_{i+1} - 1}\left(\frac{\hat{e}^t_q}{\sigma^i_q - 1}\right)
		 + \hat{e}^{\tau_{i + 1} - 1}_q \cdot \frac{\alpha \cdot \left(\sigma^{i+1}_q - \sigma^i_q\right)}{\sigma^i_q - 1}\,.
\end{equation}
Otherwise, $\sigma^{i+1}_q < \sigma^i_q$, and Lemma~\ref{lem:L-charge LB - decreasing sigma} shows that the total L-increase during the $i$-th phase is at least
\begin{equation}\label{eqn:Lch decr sigma}
\sum_{t = \tau_i}^{\tau_{i+1} - 1} \left(\frac{\hat{e}^t_q}{\sigma^i_q - 1}\right) - \frac{g^i_q}{2(\sigma^i_q - 1)}
	- \hat{e}^{i+1}_q\cdot \left(\frac{1}{\sigma^{i+1}_q - 1} - \frac{1}{\sigma^i_q - 1}\right)\,,
\end{equation}
where $g^i_q$ is the number of steps $t\in [\tau_i, \tau_{i+1})$ with $\hat{e}^t_q > 0$ and $s^t_\LQD(q) = 0$.

We now simplify the sum of the right-hand sides of~\eqref{eqn:Lch incr sigma} or~\eqref{eqn:Lch decr sigma} over all phases.
Let $i_q$ be such that $\tau_{i_q} = t_q$; index $i_q$ is well-defined by the definition of phases. %
We define a non-decreasing sequence $a_{i_q}, a_{i_q + 1}, a_{i_q + 2}, \dots$
as $a_{i_q} = \left\lceil\sigma^{i_q}_q\right\rceil - 1$ and for $i > i_q$, we set
$a_i = \max\{\sigma^i_q - 1, a_{i-1}\}$
(note that when $\calL^i_q = \emptyset$ and thus $\sigma^i_q = 1$, we have $a_i = a_{i-1}$).
By definition, the sequence of $a_i$'s upper bounds sequence $\{\sigma^i_q - 1\}_{i \ge i_q}$.

Note that $a_{i+1} - a_i \le \max\{\sigma^{i+1}_q - \sigma^i_q, 0\}$, since $a_i\ge \sigma^i_q - 1$ and
$a_{i+1}$ is equal either to $a_i$, or to $\sigma^{i+1}_q - 1$.
This in particular implies that $a_{i+1} > a_i$ only if $\sigma^{i+1}_q > \sigma^i_q$.
Further, observe that if $\hat{e}^{i+1}_q > 0$, then the last term of~\eqref{eqn:Lch decr sigma} plus the first term
of the appropriate inequality for phase $i+1$ becomes
\begin{equation*}
- \hat{e}^{i+1}_q\cdot \left(\frac{1}{\sigma^{i+1}_q - 1} - \frac{1}{\sigma^i_q - 1}\right) + \frac{\hat{e}^{i+1}_q}{\sigma^{i+1}_q - 1}
= \frac{\hat{e}^{i+1}_q}{\sigma^i_q - 1} \ge \frac{\hat{e}^{i+1}_q}{a_{i+1}}\,,
\end{equation*}
where we use that $a_{i+1}\ge a_i\ge \sigma^i_q - 1$.
Using these two observations, we lower bound the sum of the right-hand side of~\eqref{eqn:Lch incr sigma} or~\eqref{eqn:Lch decr sigma} over all phases by
\begin{equation*}
\sum_{i\ge i_q} \left( \sum_{t = \tau_i}^{\tau_{i+1} - 1} \left(\frac{\hat{e}^t_q}{a_i}\right) + 
	\hat{e}^{\tau_{i + 1} - 1}_q \cdot \frac{\alpha \cdot (a_{i+1} - a_i)}{a_i}
	- \frac{g^i_q}{2a_i}\right) \,.
\end{equation*}

Next, we switch from the summation over phases to a sum over steps. We define another sequence $b_0, b_1, \dots$
as $b_j = a_i$, where $i$ is the index of the phase which contains step $t_q + j$, i.e., satisfying $t_q + j\in [\tau_i, \tau_{i+1})$.
Note that the sequence of $b_j$'s is also non-decreasing and that we index $b_j$'s from $0$ instead of from $t_q$,
which will be convenient below. Thus, we obtain a lower bound of
\begin{equation}\label{eqn:Lch sum 1.5}
\sum_{t\ge t_q} \left(\hat{e}^t_q\cdot \frac{\alpha \cdot (b_{t - t_q + 1} - b_{t - t_q}) + 1}{b_{t - t_q}}
	- \frac{g^t_q}{2b_{t - t_q}}\right)\,,
\end{equation}
where $g^t_q = 1$ if $\hat{e}^t_q > 0$ and $s^t_\LQD(q) = 0$ (i.e., if an \OPT-extra packet is transmitted from $q$ at $t$ and is not canceled out),
and $g^t_q = 0$ otherwise.
Recall that as long as $q$ is non-empty for \LQD, $\hat{e}^t_q$ does not change (and $g^t_q = 0$) and then decreases by one in each step (and $g^t_q = 1$) until.
Since \LQD transmits at least $\left\lceil\sigma^{t_q}_q\right\rceil - 1 = b_0$ packets from $q$ in steps $t\ge t_q$,
it follows that $\hat{e}^t_q = \hat{e}_q$ for $t\in [t_q, t_q + b_0 - 1]$.
Below, instead of summing over steps $t\ge t_q$, we sum over $j\ge 0$.
Thus, we can bound~\eqref{eqn:Lch sum 1.5} from below by
\begin{equation}\label{eqn:Lch sum 2}
\sum_{j = 0}^{b_0 - 1} \frac{\alpha\cdot (b_{j+1} - b_j) + 1}{b_j}\cdot \hat{e}_q
+
\sum_{j = b_0}^{b_0 + \hat{e}_q - 1} \left( \frac{\alpha\cdot (b_{j+1} - b_j) + 1}{b_j}\cdot (\hat{e}_q +b_0 - j) 
- \frac{1}{2b_j} \right)\,.
\end{equation}
It remains to lower bound~\eqref{eqn:Lch sum 2}. %
We remark that the bound below holds for any positive integers $b_0$ and $\hat{e}_q$ and any non-decreasing sequence
of positive numbers $b_0, b_1, b_2, \dots$

Rewriting~\eqref{eqn:Lch sum 2} so that instead of summing over steps, we sum over the $\hat{e}_q$ transmitted \OPT-extra packets that are not canceled out,
we obtain
\begin{equation}\label{eqn:Lch-sum-bound1_2 w/ alpha and minus term}
\sum_{f = 1}^{\hat{e}_q} \left( \sum_{j = 0}^{b_0 + f-1} \left(\frac{\alpha\cdot (b_{j+1} - b_j) + 1}{b_j} \right)
		- \frac{1}{2b_{b_0 + f - 1}} \right)\,.
\end{equation}
We first give a useful lower bound for each summand of the outer sum.

\begin{claim}\label{clm:inner-sum-bound w/ alpha and minus term}
Assuming $\alpha \le 2/3$,
for any integer $m\ge b_0 + 1$ and  any non-decreasing sequence of positive numbers $b_0,b_1,b_2, \dots$, 
we have
\[
\sum_{j=0}^{m-1} \frac{\alpha\cdot (b_{j+1}-b_j) +1}{b_j}  - \frac{1}{2b_{m-1}}
\ge \alpha\cdot\left(  1 + \ln\left(\frac{1}{\alpha}\right) - \ln(b_0) + \ln(m)  \right) \,.
\]
\end{claim}
\begin{proof}
\begin{align}
\sum_{j=0}^{m-1} \frac{\alpha\cdot (b_{j+1}-b_j) +1}{b_j} - \frac{1}{2b_{m-1}}
&\ge \alpha\cdot \sum_{j=0}^{m-1} \frac{b_{j+1}-b_j}{b_j}+\frac{m - \frac{3}{2}}{b_m} + \frac{1}{b_0}
\nonumber\\
&\ge \alpha\cdot \sum_{j=0}^{m-1} \ln\Big(\frac{b_{j+1}}{b_j}\Big)+\frac{m - \frac{3}{2}}{b_m} + \frac{1}{b_0}
\label{eqn:inner-sum-bound w/ alpha and minus term - step2}\\
& = \alpha\cdot \ln\Big(\frac{b_{m}}{b_0}\Big)+\frac{m - \frac{3}{2}}{b_m} + \frac{1}{b_0}
\nonumber\\
&\ge \alpha + \alpha\cdot  \ln\left(\frac{m - \frac{3}{2}}{\alpha\cdot b_0}\right) + \frac{1}{b_0}
\label{eqn:inner-sum-bound w/ alpha and minus term - step4}\\
&= \alpha + \alpha\cdot \ln\left(\frac{1}{\alpha}\right) - \alpha\cdot \ln(b_0) + \alpha\cdot \ln\left(m - \frac{3}{2}\right) + \frac{1}{b_0}
\nonumber\\
&\ge \alpha + \alpha\cdot \ln\left(\frac{1}{\alpha}\right) - \alpha\cdot \ln(b_0) + \alpha\cdot \ln(m)\,,
\label{eqn:inner-sum-bound w/ alpha and minus term - step6}
\end{align}
where inequality~\eqref{eqn:inner-sum-bound w/ alpha and minus term - step2} follows from $x-1\ge \ln(x)$ for $x>0$,
step \eqref{eqn:inner-sum-bound w/ alpha and minus term - step2} uses that the expression is minimized for $b_m = \left(m - \frac32\right) / \alpha$,
and in the last step (Equation~\ref{eqn:inner-sum-bound w/ alpha and minus term - step2}), we use that $\alpha\cdot \ln\left(m - \frac{3}{2}\right) + \frac{1}{b_0} \ge \alpha\cdot \ln(m)$,
which holds for any $m\ge b_0 + 1$, $\alpha \le 2/3$, and $b_0\ge 1$.
\end{proof}

We now apply Claim~\ref{clm:inner-sum-bound w/ alpha and minus
  term} to~(\ref{eqn:Lch-sum-bound1_2 w/ alpha and minus term}) to
obtain

\begin{align}
\sum_{f = 1}^{\hat{e}_q} &\left(\sum_{j = 0}^{b_0 + f-1} \left(\frac{\alpha\cdot (b_{j+1} - b_j) + 1}{b_j} \right)
		- \frac{1}{2b_{b_0 + f - 1}}\right)
\\
&\ge \sum_{f=1}^{\hat{e}_q}\alpha\left( 1+\ln\left(\frac1\alpha\right)-\ln(b_0)+\ln(b_0+f)\right)
\nonumber\\
&= \alpha\cdot \hat{e}_q\cdot  \left(1+\ln\left(\frac1\alpha\right)-\ln(b_0)\right)
+ \alpha \sum_{f=1}^{\hat{e}_q}\ln(b_0+f)
\nonumber\\
&= \alpha\cdot \hat{e}_q\cdot  \left(1+\ln\left(\frac1\alpha\right)-\ln(b_0)\right)
+ \alpha\cdot \ln\left(\prod_{f=1}^{\hat{e}_q}(b_0+f)\right)
\nonumber\\
&= \alpha\cdot \hat{e}_q\cdot  \left(1+\ln\left(\frac1\alpha\right)-\ln(b_0)\right)
+ \alpha\cdot \ln\left(\frac{(\hat{e}_q+b_0)!}{b_0!}\right)
\nonumber\\
&> \alpha\cdot \hat{e}_q\cdot  \left(1+\ln\left(\frac1\alpha\right)-\ln(b_0)\right)
+ \alpha\cdot\left((\hat{e}_q+b_0)\cdot \ln\left(1 + \frac{\hat{e}_q}{b_0}\right) + \hat{e}_q\cdot (\ln(b_0) - 1)\right)
\label{eqn:L-increases sum final calc - strict ineq}\\
&= \alpha\cdot\hat{e}_q\cdot \left(1 + \frac{b_0}{\hat{e}_q}\right)\cdot \ln\left(1 + \frac{\hat{e}_q}{b_0}\right)+ \alpha\cdot\hat{e}_q\cdot \ln\left(\frac1\alpha\right)\,,
\end{align} 
where the strict inequality~\eqref{eqn:L-increases sum final calc - strict ineq} follows from Stirling's approximation for factorials (see Fact~\ref{fact:log-factorial} below).
This concludes the proof of Lemma~\ref{lem:L-increases sum}.
\end{proof}

Summing up the lower bound on the total S-increase with the lower bound on the total L-increase from Lemma~\ref{lem:L-increases sum},
we obtain the following lower bound (where $b_0 = \left\lceil\sigma^{t_q}_q\right\rceil - 1$):
\begin{equation}\label{eqn:total_gain_for_q}
\sum_i \Delta^i \Phi_q \ge (1 - \alpha - \beta)\cdot b_0
+ \alpha\cdot\hat{e}_q\cdot \left(1 + \frac{b_0}{\hat{e}_q}\right)\cdot \ln\left(1 + \frac{\hat{e}_q}{b_0}\right)+ \alpha\cdot\hat{e}_q\cdot \ln\left(\frac1\alpha\right)\,.
\end{equation}

\subsection{An Application of Stirling's Approximation}
\label{app:sum}

\begin{fact}\label{fact:log-factorial}
For any two positive integers $n > m$, we have
\[
\ln\left(\frac{n!}{m!}\right) >  n\ln\left(\frac{n}{m}\right) + (n-m)(\ln(m)-1) \enspace .
\]
\end{fact}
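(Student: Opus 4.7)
The plan is to recast the target inequality as a monotonicity statement about a simple real function defined on the positive integers, and then verify that monotonicity by a one-line telescoping argument. First, I would expand the right-hand side and use $\ln(n!/m!) = \ln(n!) - \ln(m!)$ to observe that the $n\ln m$ terms cancel, so the inequality rewrites as
\[
\ln(n!) - n\ln n + n \;>\; \ln(m!) - m\ln m + m \enspace .
\]
Defining $f(k) := \ln(k!) - k\ln k + k$, the fact is thus equivalent to $f(n) > f(m)$ for all positive integers $n > m$, and it suffices to show that $f$ is strictly increasing on $\mathbb{Z}_{\ge 1}$.

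Next, I would compute the discrete increment
\[
f(k+1) - f(k) \;=\; \ln(k+1) - (k+1)\ln(k+1) + k\ln k + 1 \;=\; 1 - k\ln\bigl(1 + \tfrac{1}{k}\bigr) \enspace ,
\]
and invoke the elementary bound $\ln(1+x) < x$ for $x > 0$ with $x = 1/k$ to conclude $k\ln(1 + 1/k) < 1$, so $f(k+1) - f(k) > 0$. Telescoping from $k = m$ to $k = n-1$ then yields $f(n) > f(m)$, which is the desired inequality after reversing the initial algebraic reduction.

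There is no substantive obstacle here; despite the title of the appendix referring to Stirling's approximation, no explicit use of Stirling is actually needed — the strict inequality $\ln(1+x) < x$ suffices because we only need a one-sided bound, and the telescoping exploits exactly the integer setting $n, m \in \mathbb{Z}_{\ge 1}$. The only care point is the algebraic reduction at the start, ensuring the $n\ln m$ terms really do cancel so that the inequality separates into a statement purely about $f$ evaluated at $n$ versus at $m$.
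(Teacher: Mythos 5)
Your proof is correct. Expanding the right-hand side, the $n\ln m$ terms indeed cancel, the inequality becomes $f(n)>f(m)$ for $f(k)=\ln(k!)-k\ln k+k$, and your computation $f(k+1)-f(k)=1-k\ln\bigl(1+\tfrac1k\bigr)>0$ via $\ln(1+x)<x$ is right, so telescoping over the at least one step from $m$ to $n$ gives the strict inequality.

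This is, however, a genuinely different route from the paper's. The paper proves the fact by invoking the quantitative two-sided form of Stirling's approximation, $\ln(\sqrt{2\pi k})+k(\ln k-1)<\ln(k!)<\ln(\sqrt{2\pi k})+k(\ln k-1)+\tfrac{1}{12k}$, applied to both $n!$ and $m!$; this forces it to carry the $\ln\sqrt{2\pi k}$ terms and the $\tfrac{1}{12m}$ error along, and the argument ends with a final check that $\tfrac12\ln\bigl(1+\tfrac1m\bigr)\ge\tfrac{1}{12m}$ so that the leftover terms can be discarded. Your telescoping argument sidesteps Stirling entirely: the only analytic input is $\ln(1+x)<x$, the monotonicity of $f$ does all the work, and no bookkeeping of error terms is needed. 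What each buys: the paper's proof is short if one is willing to quote the standard sharp Stirling bounds as a black box, whereas yours is completely self-contained, more elementary, and makes the strictness of the inequality transparent; it is essentially the discrete integration underlying the crude Stirling lower bound, extracted in exactly the one-sided form the fact requires.
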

\begin{proof}
By Stirling's approximation, we have that for any positive integer $n$,
\[
\ln (\sqrt{2\pi n}) + n(\ln(n)-1)<\ln(n!)<\ln (\sqrt{2\pi n}) + n(\ln(n)-1)+\frac{1}{12n} \enspace.
\]
This gives
\begin{align*}
\ln\left(\frac{n!}{m!}\right) &> \ln (\sqrt{2\pi n}) + n(\ln(n)-1) - \ln (\sqrt{2\pi m}) - m(\ln(m)-1)-\frac{1}{12m}\\
 &= \left(n+\frac12\right)\ln\left(\frac{n}{m}\right) + (n-m)(\ln(m)-1)-\frac{1}{12m}\\
& \ge n\ln\left(\frac{n}{m}\right) + (n-m)(\ln(m)-1)+\frac12\cdot\ln\left(1+\frac{1}{m}\right)-\frac{1}{12m}\\
& \ge n\ln\left(\frac{n}{m}\right) + (n-m)(\ln(m)-1)\,,
\end{align*}
where the third step uses $n > m$ and the last step holds for any $m\ge 1$.
\end{proof}

\subsection{Calculation of the Competitive Ratio Upper Bound} %
\label{sec:ratio}

According to the following lemma, we can have $\rho = 1.4455154$ in~\eqref{eqn:mainIneq},
which implies that the competitive ratio of \LQD is at most $1 + 1 / \rho < 1.6918$,
according to the discussion in Section~\ref{s:setup of analysis}.
Thus, the following lemma concludes the proof of Theorem~\ref{thm:main}.

\begin{lemma}\label{lem:gainForEachQueue}
Consider a queue $q$ with $\hat{e}_q > 0$.
For any values of $\sigma^{t_q}_q$ and $\hat{e}_q$, 
it holds that $\sum_i \Delta^i \Phi_q \ge \rho\cdot \hat{e}_q$
for $\rho = 1.4455154$. 
\end{lemma}

\begin{proof}
As before, let $b_0 = \left\lceil\sigma^{t_q}_q\right\rceil - 1$.
Using inequality~\eqref{eqn:total_gain_for_q} as a lower bound on
$\sum_i \Delta^i \Phi_q$, it is sufficient to show
\[ (1 - \alpha - \beta)\cdot b_0
+ \alpha\cdot\hat{e}_q\cdot \left(1 + \frac{b_0}{\hat{e}_q}\right)\cdot \ln\left(1 + \frac{\hat{e}_q}{b_0}\right)+ \alpha\cdot\hat{e}_q\cdot \ln\left(\frac1\alpha\right)\ge \rho\cdot \hat{e}_q\,.
\]
Dividing by $\hat{e}_q$ and defining $x=\frac{\hat{e}_q}{b_0}$ gives
us that the optimal choice of $\rho$ satisfies,
\begin{align}\label{eqn:rho}
  \rho := \sup_{0 < \alpha \le 2/3}\,\, \inf_{x>0}\left((1 - \alpha - \beta)\cdot \frac{1}{x}
+ \alpha\cdot\left(1 + \frac{1}{x}\right)\cdot \ln\left(1 + x\right)+ \alpha\cdot\ln\left(\frac1\alpha\right)\right)\,.
\end{align}
Here, we also take into account that we require $\alpha \le 2/3$ in the analysis.
Recall also that $\beta=1-\sqrt{1-\alpha}-\alpha/2$.

For a fixed $\alpha$, the partial derivative of the RHS w.r.t.\ $x$ is 
$\alpha\cdot\left(\frac{1}{x}-\frac{\ln(1+x)}{x^2}\right)-\frac{1-\alpha-\beta}{x^2}$.
By setting the derivative to zero, we have:
\begin{align}\label{eqn:rho partial derivative}
  \frac{1-\alpha-\beta}{\alpha} + \ln(1+x) = x\,.
\end{align}
This is equivalent to $x = \exp\left(\frac{\alpha+\beta-1}{\alpha} + x\right) -1$,
which in turn has solution $
  x = -1-W_{-1}\left(-\exp\left(\frac{\beta-1}{\alpha}\right)\right) %
$; %
this is the value of $x$ that minimizes the expression for $\rho$ in~\eqref{eqn:rho}.
Here, $W_{-1}$ is the lower branch of the Lambert $W$ function on real numbers in $[1/e, 0)$; this function is also called product logarithm and is
defined for any integer $k$ by $y\cdot \exp(y) = z$ if and only if $y = W_k(z)$ (that is, $W_k$ is an inverse of $f(y) = y\cdot \exp(y)$).

Using also~\eqref{eqn:rho partial derivative} to replace $\ln\left(1 + x\right)$ by $x - \frac{1-\alpha-\beta}{\alpha}$ in~\eqref{eqn:rho} and substituting  $\beta=1-\sqrt{1-\alpha}-\alpha/2$, the optimal choice of $\rho$ satisfies
\begin{align*}
  \rho := \sup_{0 < \alpha \le 2/3}\Bigg(-\alpha\cdot W_{-1}\left(-\exp\left(-\frac{\sqrt{1-\alpha}}{\alpha}-\frac{1}{2}\right)\right)
  -\sqrt{1-\alpha}+\frac{\alpha}{2}
+ \alpha\cdot\ln\left(\frac1\alpha\right)\Bigg)\,.
\end{align*}
Optimizing through mathematical software, we get that the optimal choice for $\alpha$ is approximately $0.618906$
for which $\rho\ge 1.4455154$ and therefore, the competitive ratio of \LQD is at most $1 + 1/\rho \le 1.6917948$.
\end{proof}

\bibliography{references}

\newpage
\appendix

\section{Glossary and Notation}
\label{sec:notations}

Table~\ref{tab:glossary} provides a list of concepts and notation used throughout the paper
(apart from those that are used locally, for example, inside a proof of a single lemma).
The majority of them are defined in Section~\ref{s:splitting}, apart from the first eight from Section~\ref{s:setup of analysis} and 
the last eight from Sections~\ref{s:liveAndDying} and~\ref{sec:puttingItAllTogether}.

\begin{table}[ht]
	\centering
	\caption{\small List of concepts and notation used throughout the paper }
	\label{tab:glossary}
	{\small
	\begin{tabular}{c|p{13.5cm}}
		$M$ & buffer size \\ \hline
		\OPT-extra packet & a packet transmitted by \OPT from a queue $q$ at some time $t$ such that \LQD transmits no packet from $q$ in step $t$ \\ \hline
		\LQD-extra packet & a packet transmitted by \LQD from a queue $q$ at some time $t$ such that \OPT transmits no packet from $q$ in step $t$ \\ \hline
		$\EXTRA$ & the total number of transmitted \OPT-extra packets \\ \hline
        $\LEXTRA$ & the total number of transmitted \LQD-extra packets \\ \hline
        $\rho$ & a parameter in~\eqref{eqn:mainIneq} determining the competitive ratio upper bound \\ \hline
        $e_q$ & the total number of transmitted \OPT-extra packets from queue $q$ \\ \hline
        $m_q$ & \# of transmitted \LQD-extra packets which are mapped to $q$, satisfying $m_q \le e_q$ (def.\ in Sec.~\ref{sec:mapping}) \\ \hline
        $\hat{e}_q = e_q - m_q$ & the total number of transmitted \OPT-extra packets from $q$ that are not canceled out \\ \hline
        $s^t_{\OPT}(q)$ & \# of packets in queue $q$ in the \OPT buffer in step $t$ \\ \hline
		$s^t_{\LQD}(q)$ & \# of packets in queue $q$ in the \LQD{} buffer in step $t$ \\ \hline
		$s^t_{\max} = \max_q s^t_{\LQD}(q)$ & the maximal size of a queue in the \LQD buffer in step $t$ \\ \hline
		active/inactive & a queue $q$ is active at $t$ if $s^t_{\OPT}(q)\ge 1$ or $s^t_{\LQD}(q)\ge 1$, and inactive otherwise \\ \hline
		overflow & a queue $q$ overflows in step $t$ if (i) a packet destined to $q$ is evicted by \LQD at $t$, or 
(ii) the \LQD buffer is full in step $t$, $s^t_{\LQD}(q)\ge s^t_{\max} - 1$, and $s^t_{\LQD}(q) \ge 1$, or both \\ \hline
		$t_q$ & the last time step in which queue $q$ overflows; if $q$ does not overflow in any step, we define $t_q = -1$ \\ \hline
		$e^t_q$ & \# of \OPT-extra packets transmitted from $q$ in step $t$ or later \\ \hline
		$\hat{e}^t_q = \max\{e^t_q-m_q,0\}$ & $e^t_q$ adjusted for the packets that are canceled out by transmitted \LQD-extra packets \\ \hline
		$\tau_{1}<\tau_{2}<...<\tau_{\ell}$ & the time steps in which at least one queue overflows for the last time, i.e., for each $1\leq i\leq \ell$ there is a queue $q$ such that $\tau_{i}=t_{q}\ge 0$ \\ \hline
		phase $i$ & time interval $[\tau_i, \tau_{i+1})$ (if $i=\ell$, then let $\tau_{i+1} = \infty$) \\ \hline
		$\Phi_q$ & counter for keeping track of the \LQD profit assigned to queue $q$ \\ \hline 
		$\Delta^i \Phi_q$ & the profit assigned to queue $q$ in phase $i$ \\ \hline
		$\alpha$ & a parameter for splitting the \LQD profit (set to $\approx 0.619$ at the very end) \\ \hline
		$\Psi^i$ & a potential for amortizing the \LQD profit, equal to $\alpha\cdot$ (\# of active queues overflowing after $\tau_i$) \\ \hline
		$u^i$ & \# of queues active in step $\tau_{i+1}$ that were inactive in step $\tau_i$ and will overflow after $\tau_{i+1}$ \\ \hline %
		$v^i$ & \# of queues $q$ with $t_q = \tau_{i+1}$ that are active in step $\tau_i$ \\ \hline
		$\Delta^i \Psi = \Psi^{i+1} - \Psi^i$ & the change of the potential in phase $i$, equal to $\alpha\cdot (u^i - v^i)$ \\ \hline
		$o^i$ & \# of packets that \LQD transmits in phase $i$ from queues $q$ with $\tau_i\ge t_q$ and $e_q > 0$ \\ \hline
		$n^i$ & \# of packets that \LQD transmits in phase $i$ from queues $q$ with $\tau_i < t_q$ or $e_q = 0$ \\ \hline
		$o^i_q$ & \# of packets that \LQD transmits in phase $i$ from a given queue $q$ with $\tau_i \ge t_q$ or $e_q > 0$ \\ \hline
		$j_q = \min\{j : \hat{e}^j_q = 0\}$ & the index $j$ of the earliest step $\tau_j$ in which all remaining \OPT-extra packets to be transmitted from $q$ (if any) are canceled out \\ \hline
		Live/dying  & sets of queues defined w.r.t.\ a queue $q$ with $t \ge t_q$ and $\hat{e}_q > 0$; see Def.~\ref{def:liveDying} in Sec.~\ref{s:liveAndDying} \\ \hline
		$\calL^i_q$ & the set of queues that are live in step $\tau_i$ w.r.t.\ queue $q$ \\ \hline
		$\calD^i_q$ & the set of queues that are dying in step $\tau_i$ w.r.t.\ queue $q$ \\ \hline
		$d^i_q$ & \# of packets transmitted from queues in $\calD^i_q$ during phase $i$ \\ \hline
		$\sigma^i_q$ & the average number of \LQD packets in live queues (w.r.t.\ $q$) in step $\tau_i$ \\ \hline
		L-increase & the first part of the \LQD profit assigned to a queue in a phase, see~\eqref{eqn:splitting profit def} \\ \hline
		S-increase & the second part of the \LQD profit assigned to a queue in a phase, see~\eqref{eqn:splitting profit def} \\ \hline
		$\beta$ & another parameter for splitting the \LQD profit, equal to $1-\sqrt{1-\alpha}-\alpha/2 \approx 0.0732$ \\ \hline
	\end{tabular}
	\vspace{-10em}
}
\end{table}
\end{document}